\tikzset{
	mybrace/.style={decorate,decoration={brace,aspect=#1}}
}
\newcommand{\N}{\mathbb{N}}
\newcommand{\Z}{\mathbb{Z}}
\newcommand{\F}{\mathbb{F}}
\newtheorem{definition}{Definition}
\newtheorem{lemma}{Lemma}
\newtheorem{theorem}{Theorem}
\newtheorem{remark}{Remark}
\newtheorem{problem}{Problem}
\newtheorem{example}{Example}
\newtheorem{corollary}{Corollary}
\providecommand{\keywords}[1]{\textbf{\textit{Keywords }} #1}
\begin{document}

\title{Mutually Orthogonal Latin Squares based on Cellular Automata}

\author[1]{Luca Mariot}
\author[2]{Maximilien Gadouleau}
\author[3]{Enrico Formenti}
\author[1]{Alberto Leporati}

\affil[1]{{\normalsize Dipartimento di Informatica, Sistemistica e
    Comunicazione, Universit\`{a} degli Studi di Milano-Bicocca, Viale Sarca
    336, 20126 Milano, Italy} \\

  {\small \texttt{\{luca.mariot, alberto.leporati\}@unimib.it}}}

\affil[2]{{\normalsize Department of Computer Science, Durham University, South
              Road, Durham DH1 3LE, United Kingdom} \\
  
    {\small \texttt{m.r.gadouleau@durham.ac.uk}}}
\affil[3]{{\normalsize Laboratoire d'Informatique, Signaux et Syst\`{e}mes de
              Sophia-Antipolis (I3S), Universit\'{e} C\^{o}te d'Azur,
              2000, route des Lucioles - Les Algorithmes, b\^{a}t. Euclide
              B, 06900 Sophia Antipolis (France)} \\
  
    {\small \texttt{enrico.formenti@unice.fr}}}

\maketitle

\begin{abstract}
  We investigate sets of Mutually Orthogonal Latin Squares (MOLS) generated by
  Cellular Automata (CA) over finite fields. After introducing how a CA defined
  by a bipermutive local rule of diameter $d$ over an alphabet of $q$ elements
  generates a Latin square of order $q^{d-1}$, we study the conditions under
  which two CA generate a pair of orthogonal Latin squares. In particular, we
  prove that the Latin squares induced by two Linear Bipermutive CA (LBCA) over
  the finite field $\F_q$ are orthogonal if and only if the polynomials
  associated to their local rules are relatively prime. Next, we enumerate all
  such pairs of orthogonal Latin squares by counting the pairs of coprime monic
  polynomials with nonzero constant term and degree $n$ over $\F_q$. Finally, we
  present a construction of MOLS generated by LBCA with irreducible polynomials
  and prove the maximality of the resulting sets, as well as a lower bound which
  is asymptotically close to their actual number.
\end{abstract}

\keywords{Mutually orthogonal Latin squares, cellular automata, Sylvester
  matrices, polynomials}

\thispagestyle{fancy}

\section{Introduction}
\label{sec:intro}
A \emph{Latin square} of order $N \in \N$ is a $N\times N$ matrix where each
number from $1$ to $N$ appears exactly once in each row and column. Two Latin
squares $L_1$ and $L_2$ of order $N$ are \emph{orthogonal} if by
\emph{superimposing} them one obtains all ordered pairs $(i,j)$ of numbers from
$1$ to $N$, and \emph{Mutually Orthogonal Latin Squares} (MOLS) are sets of
Latin squares that are pairwise orthogonal.

Despite their simple definition, the construction of MOLS is a notoriously
difficult combinatorial problem and it is one of the most studied research
topics in design theory. This interest is also due to the numerous applications
that MOLS have in other fields such as cryptography (for example in the design
of authentication codes~\cite{stinson92} and
multipermutations~\cite{vaudenay94}), coding theory (see e.g. the Golomb-Posner
code~\cite{golomb64}) and statistics (particularly in the design of
experiments~\cite{montgomery17}). Some of the best known constructions of MOLS
include MacNeish's theorem~\cite{macneish22} and Wilson's
construction~\cite{wilson74} (see~\cite{keedwell15,colbourn95} for a more
complete overview of construction methods).

The goal of this paper is to investigate a new construction of MOLS based on
\emph{Cellular Automata} (CA), a particular kind of discrete dynamical system
described by a regular lattice of \emph{cells}, where each cell synchronously
updates its state by applying a local rule to itself and its neighboring
cells. The motivation for studying this construction of MOLS spawned from the
question of designing a threshold secret sharing scheme based on CA without
adjacency constraints on the shares, as in the schemes proposed
in~\cite{delrey05,mariot14}.

To this end, we first isolate a particular subclass of CA -- namely, those
defined by bipermutive local rules of diameter $d$ -- and remark that the Cayley
tables of their global rules are Latin squares of order $s^{d-1}$, where $s$ is
the size of the CA alphabet. We then narrow our attention to the case where the
local rules are linear over the finite field $\F_q$, characterizing the pairs of
Linear Bipermutive CA (LBCA) that produce orthogonal Latin squares. In particular,
we prove that the Latin squares generated by two LBCA are orthogonal if and only
if the polynomials associated to their local rules are relatively prime over
$\F_q$. This is done by observing that the orthogonality of the squares is
equivalent to the invertibility of the Sylvester matrix obtained from the
transition matrices of the LBCA. Subsequently, we determine the number of pairs
of orthogonal Latin squares generated by LBCA with rules of a fixed diameter
$d$. Due to the aforementioned characterization, this actually amounts to
counting the number of pairs of coprime monic polynomials with nonzero constant
term and degree $n=d-1$ over $\F_q$. Although the enumeration of
coprime polynomial pairs over finite fields is a well-studied
problem~\cite{reifegerste00,benjamin07}, to the best of our knowledge the
case where both polynomials have a nonzero constant term has not been
addressed before. We thus solve this counting problem through a recurrence
equation, remarking that for $q=2$ the resulting integer sequence is already
known in the OEIS for other combinatorial and number-theoretic
facts~\cite{a002450}. Finally, we present a construction of MOLS based on LBCA
whose rules are defined by the product of two irreducible polynomials, and we
prove that the size of the MOLS resulting from this construction is maximal,
meaning that they cannot be extended by adding another Latin square generated by
LBCA. Further, we count how many maximal MOLS can be produced by our
construction, and we prove that the corresponding lower bound is asymptotically
close to the actual number of maximal MOLS which can be generated by LBCA.

The present paper is an extended version of~\cite{mariot16}, a work that was
informally presented at AUTOMATA 2016. In particular, the new original
contributions of this paper concern the counting results of coprime polynomials
and the construction of MOLS based on irreducible polynomials.

The rest of this paper is organized as follows. Section~\ref{sec:basic} covers
the basic background definitions about Latin squares and cellular
automata. Section~\ref{sec:char} focuses on the characterization of orthogonal
Latin squares generated by linear bipermutive CA. Section~\ref{sec:enum}
addresses the enumeration of coprime polynomials with nonzero constant term,
which are in one-to-one correspondence with orthogonal Latin squares generated
by LBCA. Section~\ref{sec:const} describes a construction of MOLS based on LBCA
with irreducible polynomials, proves the maximality of the resulting MOLS sizes
and provides a lower bound for their number. Finally, Section~\ref{sec:conc}
summarizes the contributions of this paper, and discusses some interesting
avenues for future research on this topic.

\section{Preliminaries on Latin Squares and Cellular Automata}
\label{sec:basic}
In this section, we gather all the basic definitions that will be used to
describe our results, referring the reader to~\cite{keedwell15}
and~\cite{kari05} for further information about Latin squares and cellular
automata, respectively.

We start by giving the formal definition of a Latin square:
\begin{definition}
\label{def:ls}
Let $X$ be a finite set of cardinality $|X| = N \in \N$, and let
$[N] = \{1,\cdots, N\}$. A \emph{Latin square} of order $N$ is a $N \times N$
square matrix $L$ with entries from $X$ such that, for all $i, j, k \in [N]$
with $k \neq j$, it holds that $L(i,j) \neq L(i,k)$ and $L(j,i) \neq L(k,i)$.
\end{definition}
In other words, Definition~\ref{def:ls} states that each row and each column of
a Latin square is a permutation of the support set $X$. The concept of Latin
square is equivalent to that of \emph{quasigroup}:
\begin{definition}
\label{def:qg}
A \emph{quasigroup} of order $N \in \N$ is a pair $\langle X, \circ\rangle$
where $X$ is a finite set of $N$ elements and $\circ$ is a binary operation
over $X$ such that for all $x, y \in X$ the two equations $x \circ z = y$ and $z
\circ x = y$ admit a unique solution for all $z \in X$.
\end{definition}
Indeed, an algebraic structure $\langle X, \circ \rangle$ is a quasigroup if and
only if its \emph{Cayley table} is a Latin square~\cite{keedwell15}. In what
follows, we will assume that the support set is always $X = [N] = \{1,\cdots, N\}$.

We now introduce the orthogonality property of Latin squares:
\begin{definition}
\label{def:ols}
Two Latin squares $L_1$ and $L_2$ of order $N$ are called \emph{orthogonal Latin
  squares} (OLS) if
\begin{equation}
  (L_1(i_1,j_1),L_2(i_1,j_1)) \neq (L_1(i_2,j_2),L_2(i_2,j_2))
\end{equation}
for all distinct pairs of coordinates $(i_1,j_1), (i_2,j_2) \in [N]\times
[N]$.
\end{definition}
Equivalently, $L_1$ and $L_2$ are orthogonal if their \emph{superposition}
yields all the ordered pairs of the Cartesian product $[N] \times [N]$. A set of
$k$ Latin squares which are pairwise orthogonal is denoted as a $k$-MOLS, where the
acronym stands for \emph{Mutually Orthogonal Latin Squares}.

\emph{Cellular Automata} (CA) are a particular kind of discrete
dynamical systems defined by shift-invariant local functions. In
particular, a CA is composed of a lattice of \emph{cells} whose states
range over a finite alphabet $A$. Each cell updates in parallel its
state by applying a \emph{local rule} $f:A^{\nu} \rightarrow A$ to
itself and $\nu - 1$ surrounding cells. One of the most common studied
settings is that of one-dimensional infinite CA, where the lattice is
the full shift space $A^{\Z}$. The
\emph{Curtis-Hedlund-Theorem}~\cite{hedlund69} topologically
characterizes such CA in terms of global maps
$F: A^{\Z} \rightarrow A^{\Z}$ that are both shift-invariant and
uniformly continuous with respect to the Cantor distance.

For our work, we are interested in one-dimensional \emph{finite} CA. This case
leads to the problem of updating the cells at the boundaries, since they do not
have enough neighbors upon which the local rule can be applied. In this paper we
focus on \emph{No Boundary CA} (NBCA), which we define as follows:
\begin{definition}
\label{def:nbca}
Let $A$ be a finite alphabet and $n,d \in \N$ with $n \ge d$. The \emph{No
  Boundary Cellular Automaton} (NBCA) $F: A^{n} \rightarrow A^{n-d+1}$ of length
$n$ and diameter $d$ determined by the \emph{local rule} $f:A^d \rightarrow A$
is the vectorial function defined for all $x \in A^n$ as
\begin{equation}
    \label{eq:nbca}
      F(x_0, \cdots, x_{n-1}) = (f(x_0, \cdots, x_{d-1}), f(x_1, \cdots, x_{d}),
      \cdots, f(x_{n-d}, \cdots, x_{n-1})) \enspace .
  \end{equation}
\end{definition}
In other words, in a NBCA of length $n$, each output coordinate with
index $0\le i\le n-d$ is determined by evaluating the local rule $f$
of diameter $d$ on the \emph{neighborhood} formed by the $i$-th input
coordinate $x_i$ and the $d-1$ coordinates to its right, i.e.\
$x_{i+1},\cdots,x_{i+d}$.

The NBCA model has been investigated in~\cite{mariot19} for the design of
S-boxes. There, the authors considered the case where the alphabet is $A=\F_2$,
so that a NBCA corresponds to a particular kind of vectorial Boolean function
defined by shift-invariant coordinate functions. When the CA alphabet is $\F_2$
the local rule $f:\F_2^d \rightarrow \F_2$ can be represented by its truth
table, and its decimal representation is referred to as the \emph{Wolfram code}
of the rule. In this paper, we will mainly consider the setting where the
alphabet is the finite field $\F_q$, with $q$ being any power of a prime
number. In order to avoid burdening notation we will use CA and NBCA
interchangeably, since NBCA is the only model considered in the remainder of
this work.

The following example grounds Definition~\ref{def:nbca} for the case of binary
CA (i.e. when $A = \F_2$):
\begin{example}
  \label{ex:nbca}
  Let $A = \F_2$, and consider a CA $F:\F_2^6 \rightarrow \F_2^4$ of length
  $n=6$ and diameter $d=3$ with local rule $f: \F_2^3 \rightarrow \F_2$ defined
  as $f(x_0,x_1, x_2) = x_0 \oplus x_1 \oplus x_2$. Figure~\ref{fig:nbca}
  depicts the application of the CA global function $F$ over the vector
  $x = (0, 1, 0, 1, 0, 0)$ and reports the truth table of the local rule
  $f$. The Wolfram code of $f$ is $150$, since it corresponds to the decimal
  encoding of the output column $(0,1,1,0,1,0,0,1)$ of the table.
  \begin{figure}[tb]
	\centering
	\begin{subfigure}{0.5\textwidth}
        \centering
	\begin{tikzpicture}
	[->,auto,node distance=1.5cm, empt node/.style={font=\sffamily,inner
		sep=0pt}, rect
	node/.style={rectangle,draw,font=\bfseries,minimum size=0.5cm, inner
		sep=0pt, outer sep=0pt}]
	
	\node [empt node] (c)   {};
	\node [rect node] (c1) [right=0.1cm of c] {$1$};
	\node [rect node] (c2) [right=0cm of c1] {$0$};
	\node [rect node] (c3) [right=0cm of c2] {$1$};
	\node [rect node] (c4) [right=0cm of c3] {$1$};
	
	\node [empt node] (f1) [above=0.4cm of c2.east] {{\footnotesize
			$f(0,1,0) = 1$}};
	
	\node [rect node] (p2) [above=0.85cm of c1] {$1$};
	\node [rect node] (p1) [left=0cm of p2] {$0$};
	\node [rect node] (p3) [right=0cm of p2] {$0$};
	\node [rect node] (p4) [right=0cm of p3] {$1$};
	\node [rect node] (p5) [right=0cm of p4] {$0$};
	\node [rect node] (p6) [right=0cm of p5] {$0$};
	
	\node [empt node] (p7) [below=0.2cm of p1] {};
        \node [empt node] (p7a) [below=0.5cm of c1] {\phantom{M}};
	\node [empt node] (p8) [right=0.07cm of p7] {};
	\node [empt node] (p12) [above=0.5cm of p1.east] {};
	\node [empt node] (p13) [above=0.5cm of p5.east] {};
	\node [empt node] (p14) [above=0.3cm of p13] {\phantom{M}};
	
	\draw [-, mybrace=0.25, decorate, decoration={brace,mirror,amplitude=5pt,raise=0.3cm}]
	(p1.west) -- (p3.east) node [midway,yshift=-0.3cm] {};
	(p1.west) -- (p2.east) node [midway,yshift=0.3cm] {};
	\draw[->] (p8) -- (c1.north);
	\draw[->, draw=white] (p12) edge[bend left] (p13);
      \end{tikzpicture}
      \label{fig:nbca-a}
	\end{subfigure}%
	\begin{subfigure}{0.5\textwidth}
	  \begin{tabular}{cc}
            \hline\noalign{\smallskip}
            $x_0,x_1,x_2$ & $f(x_0,x_1,x_2)$ \\
            \noalign{\smallskip}\hline\noalign{\smallskip}
            000 & 0 \\
            100 & 1 \\
            010 & 1 \\
            110 & 0 \\
            001 & 1 \\
            101 & 0 \\
            011 & 0 \\
            111 & 1 \\
            \hline\noalign{\smallskip}
          \end{tabular}
         \label{fig:nbca-b} 
	\end{subfigure}
	\caption{Example of CA of length $n=6$ defined by rule 150.}
        \label{fig:nbca}
\end{figure}
\end{example}
This paper focuses on the class of \emph{bipermutive CA}, formally defined
below:
\begin{definition}
\label{def:bca}
A CA $F:A^{n} \rightarrow A^{n-d+1}$ induced by a local rule
$f:A^{d}\rightarrow A$ is called \emph{left permutive} (respectively,
\emph{right permutive}) if, for all $z \in A^{d-1}$, the restriction
$f_{R,z}: A \rightarrow A$ (respectively, $f_{L,z}:A\rightarrow
A$) obtained by fixing the first (respectively, the last) $d-1$ coordinates
of $f$ to the values specified in $z$ is a permutation on $A$. A CA which
is both left and right permutive is said to be a \emph{bipermutive} CA (BCA).
\end{definition}
Remark that when $A = \F_2 = \{0,1\}$ a local rule $f: \F_2^d \rightarrow \F_2$
is left permutive if and only if there exists a \emph{generating function}
$\varphi: \F_2^{d-1} \rightarrow \F_2$ such that
\begin{equation}
\label{eq:perm-func}
f(x_0,x_1,\cdots,x_{d-1}) = x_0 \oplus \varphi(x_1,\cdots,x_{d-1}) \enspace ,
\end{equation}
and symmetrically for right permutive rules. Thus, bipermutive CA over $\F_2$
are those induced by local rules of the form
\begin{equation}
\label{eq:biperm-func}
f(x_0,x_1,\cdots,x_{d-1}) = x_0 \oplus \varphi(x_1,\cdots,x_{d-2}) \oplus x_{d-1} \enspace ,
\end{equation}
where $\varphi$ is a Boolean function of $d-2$ variables. Considering
Example~\ref{ex:nbca}, one can see that rule 150 is bipermutive, since it
corresponds to the case where $\varphi$ is the identity function over the
second variable of the neighborhood.

Most of the results stated in this paper concern CA that, beside being
bipermutive, are also \emph{linear} over the finite field $\F_q$. A CA
$F: \F_q^n \rightarrow \F_q^{n-d+1}$ of diameter $d$ is called linear if its
local rule $f:\F_q^d \rightarrow \F_q$ is a linear combination of the cells in
the neighborhood, i.e. there exist $a_0,\cdots,a_{d-1} \in \F_q$ such that
\begin{equation}
  \label{eq:lin-ca}
  f(x_0,\cdots,x_{d-1}) = a_0x_0 + a_1x_1 + \cdots + a_{d-1}x_{d-1} \enspace ,
\end{equation}
for all $x \in \F_q^d$, where sum and product are the field operations of
$\F_q$. For $q=2$, these respectively correspond to the logical operations XOR
($\oplus$) and AND ($\land$). A linear CA can be seen as a linear transformation
over $\F_q$-vector spaces described by the following $n \times (n-d+1)$
\emph{transition matrix}:
\begin{equation}
\label{eq:ca-matr}
M_{F} = 
\begin{pmatrix}
  a_0    & \cdots & a_{d-1} & 0 & \cdots & \cdots & \cdots & \cdots & 0 \\
  0      & a_0    & \cdots  & a_{d-1} & 0 & \cdots & \cdots & \cdots & 0 \\
  \vdots & \vdots & \vdots & \ddots  & \vdots & \vdots & \vdots & \ddots & \vdots \\
  0 & \cdots & \cdots & \cdots & \cdots & 0 & a_0 & \cdots & a_{d-1} \\
\end{pmatrix} \enspace .
\end{equation}
In particular, the CA global rule is defined as the matrix-vector multiplication
$F(x) = M_F\cdot x^{\top}$ for all $x \in \F_q^n$. As remarked
in~\cite{mariot18-naco}, the matrix $M_F$ in Equation~\eqref{eq:ca-matr} is the
generator matrix of a \emph{cyclic code}. Hence, one can naturally define the
polynomial $p_f(X) \in \F_q[X]$ associated to a linear CA $F$ as the generator
polynomial of degree $n\le d-1$ of the corresponding cyclic code:
\begin{equation}
\label{eq:pol-lin-rule}
p_f(X) = a_0 + a_1X + \cdots + a_{d-1}X^{d-1} \in \F_q[X] \enspace .
\end{equation}
It is easy to see that a linear CA is bipermutive if and only if both $a_0$ and
$a_{d-1}$ are not null. Indeed, the inverse functions of the right and left
restrictions $f_{R,z}$ and $f_{L,z}$ can be defined for all $z \in \F_q^{d-1}$
and $y \in \F_q$ as follows:
\begin{align}
  x_{d-1} &= a_{d-1}^{-1}(y - a_0z_0 - \cdots - a_{d-2}z_{d-2}) \enspace , \\
  x_0 &= a_{0}^{-1}(y - a_1z_0 - \cdots - a_{d-1}z_{d-2}) \enspace .
\end{align}
Following the notation in~\cite{mariot17-naco}, we denote by LBCA a CA $F$ which
is defined by a rule which is both linear and bipermutive. In what follows, we
will consider mainly the situation where $a_{d-1} = 1$, which means that the
polynomial $p_f(X)$ associated to the LBCA is monic of degree $n=d-1$.

\section{Characterization Results}
\label{sec:char}
In this section, we first observe that any bipermutive CA can be used to
generate a Latin square. We then prove a necessary and sufficient condition
which characterizes orthogonal Latin squares generated by pairs of LBCA.

\subsection{Latin Squares from Bipermutive CA}
\label{subsec:ls-bca}
We begin by showing that any BCA of diameter $d$ and length $2(d-1)$ generates a
Latin square of order $N = q^{d-1}$, where $q$ is the size of the CA
alphabet. To this end, we first need some additional notation and definitions.

Given an alphabet $A$ of $q$ symbols, in what follows we assume that a total
order $\le$ is defined over $A^{d-1}$ and $\phi: A^{d-1} \rightarrow [N]$ is a
monotone one-to-one mapping between $A^{d-1}$ and $[N] = \{1,\cdots,q^{d-1}\}$,
where $[N]$ is endowed with the usual order of natural numbers. We denote by
$\psi$ the inverse mapping of $\phi$.

We now formally define the notion of square associated to a CA:
\begin{definition}
\label{def:square-ca}
Let $A$ be an alphabet of $q$ symbols. The \emph{square} associated to the CA
$F: A^{2(d-1)} \rightarrow A^{d-1}$ defined by rule $f: A^d \rightarrow A$ is
the square matrix $\mathcal{S}_F$ of size $q^{d-1} \times q^{d-1}$ with entries
from $[q^{d-1}]$ defined for all $1 \le i,j \le q^{d-1}$ as
\begin{equation}
\label{eq:sq-ca}
\mathcal{S}_{F}(i,j) = \phi(F(\psi(i)||\psi(j))) \enspace ,
\end{equation}
where $\psi(i)||\psi(j) \in A^{2(d-1)}$ denotes the \emph{concatenation} of
$\psi(i),\psi(j) \in A^{d-1}$.
\end{definition}
Hence, the square $\mathcal{S}_{F}$ is defined by encoding the first half of the
CA configuration as the row coordinate $i$, the second half as the column
coordinate $j$ and the output $F(\psi(i)||\psi(j))$ as the entry at $(i,j)$.

As an example, for $A = \F_2$ and diameter $d=3$, Figure~\ref{fig:r150-sq}
depicts the square $\mathcal{S}_F$ associated to the CA
$F: \F_2^4 \rightarrow \F_2^2$ defined by rule $150$. The mapping $\phi$ is
defined as $\phi(00) \mapsto 1$, $\phi(10) \mapsto 2$, $\phi(01) \mapsto 3$ and
$\phi(11) \mapsto 4$. Notice that in this particular case $\mathcal{S}_F$ is a
Latin square.

\begin{figure}[t]
\centering
\begin{subfigure}{.5\textwidth}
\centering
\begin{tikzpicture}
[->,auto,node distance=1.5cm,
       empt node/.style={font=\sffamily,inner sep=0pt,minimum size=0pt},
       rect node/.style={rectangle,draw,font=\sffamily,minimum size=0.3cm, inner sep=0pt, outer sep=0pt}]
	
        \node [empt node] (e1) {};
	\node [rect node] (i111) [right=0.5cm of e1] {$0$};
        \node [rect node] (i112) [right=0cm of i111] {$0$};
        \node [rect node] (i113) [right=0cm of i112] {$0$};
        \node [rect node] (i114) [right=0cm of i113] {$0$};
        \node [rect node] (i115) [below=0cm of i112] {$0$};
        \node [rect node] (i116) [right=0cm of i115] {$0$};

        \node [rect node] (i121) [right=0.3cm of i114] {$0$};
        \node [rect node] (i122) [right=0cm of i121] {$0$};
        \node [rect node] (i123) [right=0cm of i122] {$1$};
        \node [rect node] (i124) [right=0cm of i123] {$0$};
        \node [rect node] (i125) [below=0cm of i122] {$1$};
        \node [rect node] (i126) [right=0cm of i125] {$1$};

        \node [rect node] (i131) [right=0.3cm of i124] {$0$};
        \node [rect node] (i132) [right=0cm of i131] {$0$};
        \node [rect node] (i133) [right=0cm of i132] {$0$};
        \node [rect node] (i134) [right=0cm of i133] {$1$};
        \node [rect node] (i135) [below=0cm of i132] {$0$};
        \node [rect node] (i136) [right=0cm of i135] {$1$};

        \node [rect node] (i141) [right=0.3cm of i134] {$0$};
        \node [rect node] (i142) [right=0cm of i141] {$0$};
        \node [rect node] (i143) [right=0cm of i142] {$1$};
        \node [rect node] (i144) [right=0cm of i143] {$1$};
        \node [rect node] (i145) [below=0cm of i142] {$1$};
        \node [rect node] (i146) [right=0cm of i145] {$0$};

	\node [rect node] (i211) [below=0.6cm of i111] {$1$};
        \node [rect node] (i212) [right=0cm of i211] {$0$};
        \node [rect node] (i213) [right=0cm of i212] {$0$};
        \node [rect node] (i214) [right=0cm of i213] {$0$};
        \node [rect node] (i215) [below=0cm of i212] {$1$};
        \node [rect node] (i216) [right=0cm of i215] {$0$};

        \node [rect node] (i221) [right=0.3cm of i214] {$1$};
        \node [rect node] (i222) [right=0cm of i221] {$0$};
        \node [rect node] (i223) [right=0cm of i222] {$1$};
        \node [rect node] (i224) [right=0cm of i223] {$0$};
        \node [rect node] (i225) [below=0cm of i222] {$0$};
        \node [rect node] (i226) [right=0cm of i225] {$1$};

        \node [rect node] (i231) [right=0.3cm of i224] {$1$};
        \node [rect node] (i232) [right=0cm of i231] {$0$};
        \node [rect node] (i233) [right=0cm of i232] {$0$};
        \node [rect node] (i234) [right=0cm of i233] {$1$};
        \node [rect node] (i235) [below=0cm of i232] {$1$};
        \node [rect node] (i236) [right=0cm of i235] {$1$};

        \node [rect node] (i241) [right=0.3cm of i234] {$1$};
        \node [rect node] (i242) [right=0cm of i241] {$0$};
        \node [rect node] (i243) [right=0cm of i242] {$1$};
        \node [rect node] (i244) [right=0cm of i243] {$1$};
        \node [rect node] (i245) [below=0cm of i242] {$0$};
        \node [rect node] (i246) [right=0cm of i245] {$0$};

	\node [rect node] (i311) [below=0.6cm of i211] {$0$};
        \node [rect node] (i312) [right=0cm of i311] {$1$};
        \node [rect node] (i313) [right=0cm of i312] {$0$};
        \node [rect node] (i314) [right=0cm of i313] {$0$};
        \node [rect node] (i315) [below=0cm of i312] {$1$};
        \node [rect node] (i316) [right=0cm of i315] {$1$};

        \node [rect node] (i321) [right=0.3cm of i314] {$0$};
        \node [rect node] (i322) [right=0cm of i321] {$1$};
        \node [rect node] (i323) [right=0cm of i322] {$1$};
        \node [rect node] (i324) [right=0cm of i323] {$0$};
        \node [rect node] (i325) [below=0cm of i322] {$0$};
        \node [rect node] (i326) [right=0cm of i325] {$0$};

        \node [rect node] (i331) [right=0.3cm of i324] {$0$};
        \node [rect node] (i332) [right=0cm of i331] {$1$};
        \node [rect node] (i333) [right=0cm of i332] {$0$};
        \node [rect node] (i334) [right=0cm of i333] {$1$};
        \node [rect node] (i335) [below=0cm of i332] {$1$};
        \node [rect node] (i336) [right=0cm of i335] {$0$};

        \node [rect node] (i341) [right=0.3cm of i334] {$0$};
        \node [rect node] (i342) [right=0cm of i341] {$1$};
        \node [rect node] (i343) [right=0cm of i342] {$1$};
        \node [rect node] (i344) [right=0cm of i343] {$1$};
        \node [rect node] (i345) [below=0cm of i342] {$0$};
        \node [rect node] (i346) [right=0cm of i345] {$1$};

	\node [rect node] (i411) [below=0.6cm of i311] {$1$};
        \node [rect node] (i412) [right=0cm of i411] {$1$};
        \node [rect node] (i413) [right=0cm of i412] {$0$};
        \node [rect node] (i414) [right=0cm of i413] {$0$};
        \node [rect node] (i415) [below=0cm of i412] {$0$};
        \node [rect node] (i416) [right=0cm of i415] {$1$};

        \node [rect node] (i421) [right=0.3cm of i414] {$1$};
        \node [rect node] (i422) [right=0cm of i421] {$1$};
        \node [rect node] (i423) [right=0cm of i422] {$1$};
        \node [rect node] (i424) [right=0cm of i423] {$0$};
        \node [rect node] (i425) [below=0cm of i422] {$1$};
        \node [rect node] (i426) [right=0cm of i425] {$0$};

        \node [rect node] (i431) [right=0.3cm of i424] {$1$};
        \node [rect node] (i432) [right=0cm of i431] {$1$};
        \node [rect node] (i433) [right=0cm of i432] {$0$};
        \node [rect node] (i434) [right=0cm of i433] {$1$};
        \node [rect node] (i435) [below=0cm of i432] {$0$};
        \node [rect node] (i436) [right=0cm of i435] {$0$};

        \node [rect node] (i441) [right=0.3cm of i434] {$1$};
        \node [rect node] (i442) [right=0cm of i441] {$1$};
        \node [rect node] (i443) [right=0cm of i442] {$1$};
        \node [rect node] (i444) [right=0cm of i443] {$1$};
        \node [rect node] (i445) [below=0cm of i442] {$1$};
        \node [rect node] (i446) [right=0cm of i445] {$1$};
	
\end{tikzpicture}
\end{subfigure}%
\begin{subfigure}{.5\textwidth}
\centering
\begin{tikzpicture}
[->,auto,node distance=1.5cm,
       empt node/.style={font=\sffamily,inner sep=0pt,minimum size=0pt},
       rect node/.style={rectangle,draw,font=\sffamily,minimum size=0.8cm, inner sep=0pt, outer sep=0pt}]
	\large
        
	\node [rect node] (s11) {$1$};
        \node [rect node] (s12) [right=0cm of s11] {$4$};
        \node [rect node] (s13) [right=0cm of s12] {$3$};
        \node [rect node] (s14) [right=0cm of s13] {$2$};

	\node [rect node] (s21) [below=0cm of s11] {$2$};
        \node [rect node] (s22) [right=0cm of s21] {$3$};
        \node [rect node] (s23) [right=0cm of s22] {$4$};
        \node [rect node] (s24) [right=0cm of s23] {$1$};

	\node [rect node] (s31) [below=0cm of s21] {$4$};
        \node [rect node] (s32) [right=0cm of s31] {$1$};
        \node [rect node] (s33) [right=0cm of s32] {$2$};
        \node [rect node] (s34) [right=0cm of s33] {$3$};

	\node [rect node] (s41) [below=0cm of s31] {$3$};
        \node [rect node] (s42) [right=0cm of s41] {$2$};
        \node [rect node] (s43) [right=0cm of s42] {$1$};
        \node [rect node] (s44) [right=0cm of s43] {$4$};
	
\end{tikzpicture}
\end{subfigure}%
\caption{Example of square of order $2^{3-1}=4$ induced by rule $150$.}
\label{fig:r150-sq}
\end{figure}

We remark that this representation has been adopted in several works in the CA
literature, even though under a different guise. Indeed, one can consider the
square associated to a CA as the Cayley table of an algebraic structure
$\langle S, \circ \rangle$, where $S$ is a set of size $|A|^{d-1}$ isomorphic to
$A^{d-1}$, and $\circ$ is a binary operation over $S$. The two operands
$x, y \in S$ are represented by the vectors respectively composed of the
leftmost and rightmost $d-1$ input cells of the CA, while the $d-1$ output cells
represent the result $z = x \circ y$. To the best of our knowledge, the first
researchers who employed this algebraic characterization of cellular automata
were Pedersen~\cite{pedersen92} and Eloranta~\cite{eloranta93}, respectively for
investigating their periodicity and partial reversibility properties. Other
works in this line of research include Moore and Drisko~\cite{moore96}, who
studied the algebraic properties of the square representation of CA, and
Moore~\cite{moore98}, who considered the computational complexity of predicting
CA whose local rules define solvable and nilpotent groups.

As noticed above, the square associated to the CA defined by rule 150 is
actually a Latin square. We will now show that this holds in general for all
bipermutive CA. To this end, we first recall a Lemma proved in~\cite{mariot14},
which states that fixing $d-1$ adjacent cells in a bipermutive CA yields a
permutation between the remaining variables and the output:
\begin{lemma}
\label{lm:restr-perm}
Let $F: A^n \rightarrow A^{n-d+1}$ be a BCA defined by local rule
$f: A^{d} \rightarrow A$. Given $\tilde{x} \in A^{d-1}$ and $i$ with
$0 \le i \le n-d+1$, let $F|_{\tilde{x},i}: A^{n-d+1} \rightarrow A^{n-d+1}$ be
the restriction of $F$ obtained by fixing to $\tilde{x}$ the block of $d-1$
consecutive coordinates starting in $i$ of the BCA input vector, i.e.
$x_i = \tilde{x}_0$, $x_{i+1} = \tilde{x}_1$, $\cdots$,
$x_{i+d-2} = \tilde{x}_{d-2}$. Then, $F|_{\tilde{x},i}$ is a permutation over
$A^{n-d+1}$.
\end{lemma}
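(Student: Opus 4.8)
The plan is to reduce the statement to the elementary fact that fixing a single block of $d-1$ consecutive cells determines the whole configuration uniquely, cell by cell, by propagating alternately to the left and to the right using the two single-variable permutations guaranteed by bipermutivity. Since $F|_{\tilde{x},i}$ is a map from the finite set $A^{n-d+1}$ to itself, it suffices to show it is injective, i.e.\ that if two input vectors $x,x'\in A^n$ agree on the block $x_i,\dots,x_{i+d-2}=\tilde{x}$ and satisfy $F(x)=F(x')$, then $x=x'$. In fact I would prove the slightly stronger and cleaner statement that any input of a BCA is uniquely determined by the values of $d-1$ consecutive cells together with the full output vector $F(x)\in A^{n-d+1}$; Lemma~\ref{lm:restr-perm} is then immediate, because $F|_{\tilde{x},i}$ maps the $n-d+1$ free coordinates of the input injectively into the $n-d+1$ output coordinates, and an injection between finite sets of equal cardinality is a bijection.

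The key steps are as follows. First, set up notation: let $w\in A^n$ be any preimage, write $w_i,\dots,w_{i+d-2}=\tilde{x}$ for the fixed block, and let $y=F(w)\in A^{n-d+1}$, so that $y_k=f(w_k,\dots,w_{k+d-1})$ for $0\le k\le n-d$. I claim each coordinate $w_j$ is forced. Propagate to the right: knowing $w_i,\dots,w_{i+d-2}$ and the output value $y_i=f(w_i,\dots,w_{i+d-1})$, right permutivity says the restriction $f_{L,z}$ with $z=(w_i,\dots,w_{i+d-2})$ is a permutation of $A$, hence $w_{i+d-1}$ is uniquely recovered as $w_{i+d-1}=f_{L,z}^{-1}(y_i)$. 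Now the block $w_{i+1},\dots,w_{i+d-1}$ is known, and we repeat with $y_{i+1}$ to recover $w_{i+d}$, and so on, until all coordinates $w_{i+d-1},\dots,w_{n-1}$ are determined (this uses exactly the output coordinates $y_i,\dots,y_{n-d}$). Symmetrically, propagate to the left: knowing $w_{i+1},\dots,w_{i+d-1}$ and $y_{i}$—wait, more carefully, knowing $w_{i},\dots,w_{i+d-2}$ and the output $y_{i-1}=f(w_{i-1},\dots,w_{i+d-2})$, left permutivity says the restriction $f_{R,z'}$ with $z'=(w_i,\dots,w_{i+d-2})$ is a permutation of $A$, so $w_{i-1}=f_{R,z'}^{-1}(y_{i-1})$ is uniquely determined; iterating with $y_{i-2},\dots,y_0$ recovers $w_{i-1},\dots,w_0$. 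Hence every coordinate of $w$ is uniquely determined by $\tilde{x}$ and $y$, proving injectivity.

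For Lemma~\ref{lm:restr-perm} as literally stated, note that the free coordinates of the input vector of $F|_{\tilde{x},i}$ are the $n-(d-1)=n-d+1$ positions outside the fixed block, and the codomain $A^{n-d+1}$ also has cardinality $|A|^{n-d+1}$; the argument above shows distinct free parts give distinct outputs, so $F|_{\tilde{x},i}$ is an injective self-map of a finite set and therefore a permutation. I would also remark that the index condition $0\le i\le n-d+1$ is exactly what is needed for the block of $d-1$ cells starting at $i$ to fit inside $\{0,\dots,n-1\}$ and for at least the right-propagation (or, at the endpoints, just one of the two directions) to cover all remaining cells. The only mildly delicate point—and the one I would write out with care rather than hand-wave—is the bookkeeping of which output coordinates $y_k$ are consumed at each propagation step and the verification that, taken together, the leftward and rightward sweeps use each of $y_0,\dots,y_{n-d}$ exactly once; everything else is a direct consequence of the definitions and of bipermutivity.
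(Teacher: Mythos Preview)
The paper does not actually prove Lemma~\ref{lm:restr-perm}; it merely recalls it from~\cite{mariot14}. Your propagation argument is correct and is essentially the standard proof: fix the $d-1$ contiguous cells, then use permutivity in the rightmost variable together with the output coordinates $y_i,\dots,y_{n-d}$ to recover $w_{i+d-1},\dots,w_{n-1}$ one at a time, and permutivity in the leftmost variable together with $y_{i-1},\dots,y_0$ to recover $w_{i-1},\dots,w_0$; injectivity on a finite set of equal cardinality then gives bijectivity. Your bookkeeping of which $y_k$ is used at each step, including the boundary cases $i=0$ and $i=n-d+1$, checks out.

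One small caution: your labeling of the restrictions is swapped relative to the paper's (admittedly confusing) Definition~\ref{def:bca}. In the paper, the restriction that fixes the \emph{first} $d-1$ coordinates is called $f_{R,z}$, and its being a permutation is the definition of \emph{left} permutivity (consistent with Equation~\eqref{eq:perm-func}, where left permutivity means the rule is of the form $x_0\oplus\varphi(x_1,\dots,x_{d-1})$, i.e.\ permutive in $x_0$---but then ``fixing the first $d-1$ coordinates'' in the definition appears to be a slip). Since you only ever invoke bipermutivity, this has no effect on correctness, but if you write the proof up formally you should either match the paper's notation exactly or simply say ``by bipermutivity, fixing the leftmost (resp.\ rightmost) $d-1$ arguments of $f$ yields a permutation in the remaining variable'' and avoid the $f_{L,z}/f_{R,z}$ labels altogether.
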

On account of Lemma~\ref{lm:restr-perm}, we can prove that the squares
associated to bipermutive CA are indeed Latin squares:
\begin{lemma}
\label{lm:lat-sq-bip-ca}
Let $A$ be an alphabet of $q$ symbols, and $d\ge 2$. Then, the square $L_{F}$ of
the BCA $F: A^{2(d-1)} \rightarrow A^{d-1}$ defined by local rule
$f:A^{d}\rightarrow A$ is a Latin square of order $N=q^{d-1}$.
\end{lemma}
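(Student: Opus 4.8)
The plan is to reduce the claim directly to Lemma~\ref{lm:restr-perm}. Write $n = 2(d-1)$, so that $n-d+1 = d-1$ and the restrictions $F|_{\tilde x,i}$ of that lemma are maps $A^{d-1}\to A^{d-1}$ with equal-dimensional domain and codomain. Since $\phi:A^{d-1}\to[N]$ and its inverse $\psi$ are bijections, and $L_F = \mathcal{S}_F$ is defined by $\mathcal{S}_F(i,j) = \phi(F(\psi(i)||\psi(j)))$, proving that $\mathcal{S}_F$ is a Latin square of order $N = q^{d-1}$ amounts to showing two things: for every fixed $u\in A^{d-1}$ the map $v\mapsto F(u||v)$ is a permutation of $A^{d-1}$ (this forces every row of $\mathcal{S}_F$ to contain each symbol exactly once), and symmetrically for every fixed $w\in A^{d-1}$ the map $v\mapsto F(v||w)$ is a permutation of $A^{d-1}$ (this forces the same for every column). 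Indeed, row $i$ of $\mathcal{S}_F$ is the composite $\phi\circ(v\mapsto F(\psi(i)||v))\circ\psi$, so if the middle map is a bijection of $A^{d-1}$ then row $i$ is a bijection of $[N]$; the column argument is identical with the two halves exchanged.

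For the row statement I would apply Lemma~\ref{lm:restr-perm} with $i = 0$: fixing the block of $d-1$ consecutive input coordinates starting at position $0$ to $\tilde x = \psi(i)$ fixes exactly the first half $x_0,\dots,x_{d-2}$, so the free coordinates $x_{d-1},\dots,x_{2d-3}$ are precisely the second half, and $F|_{\psi(i),0}$ is exactly the map $v\mapsto F(\psi(i)||v)$. Since $d\ge 2$ we have $0\le 0\le n-d+1$, so the lemma applies and this map is a permutation of $A^{d-1}$. For the column statement I would apply Lemma~\ref{lm:restr-perm} with $i = d-1$: the block of $d-1$ consecutive coordinates starting at position $d-1$ occupies positions $d-1,\dots,2d-3$, i.e.\ the second half; fixing it to $\tilde x = \psi(j)$ leaves the first half $x_0,\dots,x_{d-2}$ free, so $F|_{\psi(j),d-1}$ is exactly $v\mapsto F(v||\psi(j))$, and since $0\le d-1\le n-d+1$ the lemma again yields that this is a permutation of $A^{d-1}$.

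Combining these two facts with the bijections $\phi,\psi$ shows that each row and each column of $\mathcal{S}_F$ is a permutation of its support set; as there are $N$ entries in each line over a set of size $N$, this is equivalent to the pairwise-distinctness condition of Definition~\ref{def:ls}, so $L_F = \mathcal{S}_F$ is a Latin square of order $N = q^{d-1}$. The only point requiring care is the index bookkeeping — verifying that for the specific length $n = 2(d-1)$ the two halves of the CA configuration coincide exactly with the free coordinates of $F|_{\tilde x,0}$ and $F|_{\tilde x,d-1}$ respectively, and that both $i = 0$ and $i = d-1$ fall in the admissible range $0\le i\le n-d+1$ of Lemma~\ref{lm:restr-perm}. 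Beyond this there is no genuine obstacle, since the substantive work is already encapsulated in Lemma~\ref{lm:restr-perm}.
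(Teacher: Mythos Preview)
Your proposal is correct and follows essentially the same approach as the paper: both reduce the claim to Lemma~\ref{lm:restr-perm}, applying it once with the leftmost $d-1$ coordinates fixed to handle rows and once with the rightmost $d-1$ coordinates fixed to handle columns. Your version is slightly more explicit about the index bookkeeping (identifying the column case as $i=d-1$ and verifying the range $0\le i\le n-d+1=d-1$), but the argument is the same.
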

\begin{proof}
  Let $i \in [N]$ be a row of $L_{F}$, and let
  $\psi(i) = (x_0,\cdots,x_{d-2}) \in A^{d-1}$ be the vector associated to $i$
  with respect to the total order $\le$ on $A^{d-1}$. Consider now the set
  $C = \{c \in A^{2(d-1)}: (c_0,\cdots,c_{d-2}) = \psi(i)\}$, i.e. the set of
  configurations of length $2(d-1)$ whose first $d-1$ coordinates coincide with
  $\psi(i)$, and let $F_{\psi(i),0}: A^{d-1} \rightarrow A^{d-1}$ be the
  restriction of $F$ determined by $\psi(i)$. By Lemma~\ref{lm:restr-perm}, the
  function $F_{\psi(i),0}$ is a permutation over $A^{d-1}$. So, the $i$-th row
  of $L_{F}$ is a permutation of $[N]$. A symmetric argument holds for any
  column $j$ of $L_{F}$, with $1\le j \le N$, which fixes the rightmost $d-1$
  variables of $F$ to $\psi(j)$. Hence, every column of $L_{F}$ is also a
  permutation of $[N]$, and thus $L_{F}$ is a Latin square of order
  $q^{d-1}$.\qed
\end{proof}

\subsection{Orthogonal Latin Squares from Linear Bipermutive CA}
\label{subsec:ols-lin-bca}
In the next result, we prove a characterization of orthogonal Latin squares
generated by LBCA in terms of their associated polynomials:

\begin{theorem}
  \label{thm:ols-lbca}
  Let $F, G:\F_q^{2(d-1)} \rightarrow \F_q^{d-1}$ be two LBCA of length
  $2(d-1)$, respectively defined by the local rules
  $f,g: \F_q^d \rightarrow \F_q$ defined as:
  \begin{align}
    f(x_0,\cdots,x_{d-1}) &= a_0x_0 + \cdots + a_{d-1}x_{d-1} \enspace , \\
    g(x_0,\cdots,x_{d-1}) &= b_0x_0 + \cdots + b_{d-1}x_{d-1} \enspace .
  \end{align}
  Then, the Latin squares $L_{F}$ and $L_{G}$ of order $q^{d-1}$ generated by
  $F$ and $G$ are orthogonal if and only if the polynomials
  $p_f(X),p_g(X) \in \F_q[X]$ associated to $f$ and $g$ are relatively prime.
\end{theorem}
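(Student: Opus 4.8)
The plan is to reduce the orthogonality of $L_F$ and $L_G$ to the invertibility of a single $2(d-1)\times 2(d-1)$ matrix over $\F_q$, and then to recognise that matrix as a row/column permutation of the Sylvester matrix of $p_f$ and $p_g$.

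First I would unwind the definitions. Since the encoding map $\phi$ and its inverse $\psi$ are bijections between $A^{d-1}$ and $[N]$, Definition~\ref{def:square-ca} shows that the overlay map $(i,j)\mapsto\bigl(L_F(i,j),L_G(i,j)\bigr)$ on $[N]\times[N]$ is a bijection if and only if the map $H\colon\F_q^{2(d-1)}\to\F_q^{d-1}\times\F_q^{d-1}$, $H(x)=(F(x),G(x))$, is a bijection, where a configuration $x=u\|v$ of length $2(d-1)$ is identified with the pair $(u,v)$ carrying the row and column coordinates. By Definition~\ref{def:ols} the former is exactly the orthogonality of $L_F$ and $L_G$; and since domain and codomain of $H$ both have cardinality $q^{2(d-1)}$, bijectivity of $H$ is the same as its injectivity. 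Now $H$ is $\F_q$-linear, and stacking the two transition matrices of Equation~\eqref{eq:ca-matr} gives $H(x)^\top=Nx^\top$ with $N=\bigl(\begin{smallmatrix}M_F\\ M_G\end{smallmatrix}\bigr)$, a square matrix of order $2(d-1)$. A linear endomorphism of a finite-dimensional space is bijective iff its matrix is invertible, so $L_F$ and $L_G$ are orthogonal iff $\det N\neq 0$.

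Next I would identify $N$ with a resultant matrix. The block $M_F$ is made of $d-1$ successive right-shifts of the coefficient row $(a_0,a_1,\dots,a_{d-1})$ of $p_f$, and $M_G$ is built in the same way from $(b_0,\dots,b_{d-1})$; since $p_f$ and $p_g$ have degree $d-1$, their Sylvester matrix is precisely of this shape and of this size $2(d-1)\times 2(d-1)$. Up to reversing the order of the columns and the order of the $d-1$ rows within each block — permutations of rows and columns, hence changing the determinant only by a sign — the matrix $N$ coincides with $\mathrm{Syl}(p_f,p_g)$, so $\det N=\pm\operatorname{Res}(p_f,p_g)$. Finally I would invoke the classical fact that, for two polynomials with nonzero leading coefficients (here $a_{d-1},b_{d-1}\neq 0$ by bipermutivity, indeed equal to $1$ under the standing monicity convention), the resultant vanishes iff the polynomials have a non-constant common factor. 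Hence $\det N\neq 0\iff\gcd(p_f,p_g)=1$, which with the previous paragraph proves the theorem.

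The argument is largely routine; the step requiring the most care is the matrix bookkeeping in the third paragraph, namely verifying that the dimensions match exactly (two degree-$(d-1)$ polynomials yield a $2(d-1)\times 2(d-1)$ Sylvester matrix, matching the order of $N$), and that the coefficient-ordering convention of Equation~\eqref{eq:ca-matr} agrees, after the harmless row/column reversals, with the convention defining $\mathrm{Syl}(p_f,p_g)$ and $\operatorname{Res}(p_f,p_g)$. One should also take a moment to confirm that the reduction in the second paragraph loses nothing — that orthogonality really is equivalent to bijectivity of $H$ and not to a weaker condition — which is where the identification of $\F_q^{d-1}\times\F_q^{d-1}$ with $\F_q^{2(d-1)}$ and the bijectivity of $\phi,\psi$ do the work.
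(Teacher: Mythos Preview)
Your proposal is correct and follows essentially the same approach as the paper: reduce orthogonality to bijectivity of the stacked linear map $(F,G)$, recognise its matrix as the Sylvester matrix of $p_f$ and $p_g$, and invoke the resultant criterion for coprimality. The paper is slightly terser---it simply declares the stacked matrix to be a Sylvester matrix without discussing coefficient-ordering conventions---whereas you are (harmlessly) more explicit about the possible row/column reversals; as a minor stylistic point, you may wish to rename your matrix $N$ since the paper already uses $N=q^{d-1}$ for the order of the Latin squares.
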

\begin{proof}
  Denote by $z=x||y$ the concatenation of vectors $x$ and $y$. We show that the
  function
  $\mathcal{H}: \F_q^{2(d-1)}\times \F_q^{2(d-1)} \rightarrow
  \F_q^{2(d-1)}\times\F_q^{2(d-1)}$, defined for all
  $(x,y) \in \F_q^{2(d-1)}\times \F_q^{2(d-1)}$ as
  \begin{equation}
  \label{eq:map-h}
  \mathcal{H}(x,y) = (F(z), \mathcal{G}(z)) = (\tilde{x}, \tilde{y})
  \end{equation}
  is bijective if and only if the polynomials $p_f(X)$ and $p_g(X)$ associated
  to $F$ and $G$ are coprime. Given the transition matrices $M_{F}$ and $M_{G}$
  respectively associated to $F$ and $G$, one can rewrite
  Equation~\eqref{eq:map-h} as a system of two equations:
  \begin{equation}
    \label{eq:system-1}
    \begin{cases}
      F(z) = M_{F}z^\top = \tilde{x} \\
      \mathcal{G}(z) = M_{G}z^\top = \tilde{y}
    \end{cases} \enspace .
  \end{equation}
  Since both $M_F$ and $M_G$ have size $(d-1)\times 2(d-1)$,
  Equation~\eqref{eq:system-1} is a linear system of $2(d-1)$ equations and
  $2(d-1)$ unknowns, defined by the following $2(d-1)\times 2(d-1)$ square
  matrix:
  \begin{equation}
    \label{eq:sylv-matr}
    M =
    \begin{pmatrix}
      a_0    & \cdots & a_{d-1} & 0 & \cdots & \cdots & \cdots & \cdots & 0 \\
      0      & a_0    & \cdots  & a_{d-1} & 0 & \cdots & \cdots & \cdots & 0 \\
      \vdots & \vdots & \vdots & \ddots  & \vdots & \vdots & \vdots & \ddots & \vdots \\
      0 & \cdots & \cdots & \cdots & \cdots & 0 & a_0 & \cdots & a_{d-1} \\
      b_0    & \cdots & b_{d-1} & 0 & \cdots & \cdots & \cdots & \cdots & 0 \\
      0      & b_0    & \cdots  & b_{d-1} & 0 & \cdots & \cdots & \cdots & 0 \\
      \vdots & \vdots & \vdots & \ddots  & \vdots & \vdots & \vdots & \ddots & \vdots \\
      0 & \cdots & \cdots & \cdots & \cdots & 0 & b_0 & \cdots & b_{d-1} \\
\end{pmatrix} \enspace ,
\end{equation} 
i.e., $M$ is obtained by superposing the transition matrices $M_F$ and
$M_G$. Thus $\mathcal{H}(x,y) = Mz^\top$ and $\mathcal{H}$ is bijective if and
only if the determinant of $M$ is not null. Remark that matrix $M$ in
Equation~\eqref{eq:sylv-matr} is a \emph{Sylvester matrix}, and its determinant
is the \emph{resultant} of the two polynomials $p_f(X)$ and $p_g(X)$ associated
to the LBCA $F$ and $G$, respectively. It is well known (see for
instance~\cite{lidl-field}) that the resultant of two polynomials is nonzero if
and only if they are relatively prime. Hence, $\mathcal{H}$ is bijective (or
equivalently, the Latin squares $L_F$ and $L_G$ are orthogonal) if and only if
the polynomials $p_f(X)$ and $p_g(X)$ are relatively prime.\qed
\end{proof}
The next result immediately follows from the above theorem:
\begin{corollary}
\label{cor:mols-lbca}
A family $p_1(X),\cdots,p_k(X) \in \F_q[X]$ of $k \in \N$ pairwise coprime
polynomials of degree $n=d-1$ is equivalent to a set of $k$ MOLS of order
$q^{n}$ generated by LBCA.
\end{corollary}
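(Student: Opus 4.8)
The plan is to obtain the statement as a direct consequence of Lemma~\ref{lm:lat-sq-bip-ca} and Theorem~\ref{thm:ols-lbca}, spelling out the claimed equivalence in both directions. Throughout I would fix conventions as in the rest of the paper: a ``polynomial of degree $n=d-1$'' is understood to be monic with nonzero constant term, so that it is exactly the polynomial $p_f(X)$ associated to a unique LBCA $F:\F_q^{2(d-1)}\rightarrow\F_q^{d-1}$ whose local rule $f$ has coefficient list $(a_0,\dots,a_{d-1})$ equal to the list of coefficients of the polynomial; conversely every LBCA of length $2(d-1)$ arises in this way. Via $F\mapsto L_F$ (Definition~\ref{def:square-ca}) this yields a map from such polynomials to squares of order $q^{n}$, and it is this correspondence that I would use to transport coprimality into orthogonality.

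First I would argue the forward implication. Let $p_1(X),\dots,p_k(X)$ be pairwise coprime of degree $n$, and let $F_1,\dots,F_k$ be the associated LBCA. By Lemma~\ref{lm:lat-sq-bip-ca} each $L_{F_i}$ is a Latin square of order $N=q^{n}$. Fixing any $i\neq j$, coprimality of $p_i(X)$ and $p_j(X)$ together with Theorem~\ref{thm:ols-lbca} applied to the pair $(F_i,F_j)$ gives that $L_{F_i}$ and $L_{F_j}$ are orthogonal. Since $i,j$ were arbitrary, $\{L_{F_1},\dots,L_{F_k}\}$ is a $k$-MOLS of order $q^{n}$ generated by LBCA; these squares are automatically pairwise distinct, as a Latin square of order $N\ge 2$ is never orthogonal to itself (superimposing $L$ with itself yields only the $N$ diagonal pairs).

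Then I would argue the converse. Suppose $\{L_{F_1},\dots,L_{F_k}\}$ is a $k$-MOLS with each $F_i$ an LBCA of length $2(d-1)$, and let $p_i(X)=p_{f_i}(X)$ be its associated polynomial; it has degree $n=d-1$ and nonzero constant term because $F_i$ is bipermutive. For every $i\neq j$ the squares $L_{F_i},L_{F_j}$ are orthogonal, so Theorem~\ref{thm:ols-lbca} forces $\gcd(p_i(X),p_j(X))=1$; hence $p_1(X),\dots,p_k(X)$ is a family of pairwise coprime polynomials of degree $n$ (in particular pairwise distinct, since each has degree $\ge 1$). The two implications together give the stated equivalence.

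Because this is a corollary, I do not expect any genuine obstacle. The only points requiring a little care are: (i) pinning down the convention that ``degree $n=d-1$'' means ``monic with nonzero constant term'', which is what makes the passage $p\mapsto F\mapsto L_F$ a well-defined correspondence onto LBCA-generated Latin squares; and (ii) noting that orthogonality of the $L_{F_i}$ already guarantees the squares — and hence the polynomials — are pairwise distinct, so that the $k$-MOLS genuinely consists of $k$ members matching the $k$ polynomials.
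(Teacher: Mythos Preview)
Your proposal is correct and follows exactly the approach the paper intends: the paper gives no written proof for this corollary, stating only that it ``immediately follows from the above theorem,'' and your argument simply unpacks that claim by applying Lemma~\ref{lm:lat-sq-bip-ca} and Theorem~\ref{thm:ols-lbca} pairwise in both directions. The extra care you take with the monic/nonzero-constant-term convention and with distinctness is sound and matches the paper's standing assumptions.
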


For alphabet $A=\F_2$ and diameter $d=3$ there exist only two linear bipermutive
rules, i.e. rule $150$ and rule $90$, the latter defined as
$f_{90}(x_0,x_1,x_2) = x_0 \oplus x_2$. As shown in Figure~\ref{fig:r150-90},
the Latin squares of order $N=4$ defined by the LBCA $F_{150}$ and $F_{90}$
respectively induced by $f_{150}$ and $f_{90}$ are orthogonal, since the
associated polynomials $p_{150}(X) = 1+X+X^2$ and $p_{90}(X) = 1+X^2$ are
coprime over $\F_2$.
\begin{figure}[t]
\centering
\begin{subfigure}{.3\textwidth}
\centering
\begin{tikzpicture}
[->,auto,node distance=1.5cm,
       empt node/.style={font=\sffamily,inner sep=0pt,minimum size=0pt},
       rect node/.style={rectangle,draw,font=\sffamily,minimum size=0.7cm, inner sep=0pt, outer sep=0pt}]

	\node [rect node] (s11) {$1$};
        \node [rect node] (s12) [right=0cm of s11] {$4$};
        \node [rect node] (s13) [right=0cm of s12] {$3$};
        \node [rect node] (s14) [right=0cm of s13] {$2$};

	\node [rect node] (s21) [below=0cm of s11] {$2$};
        \node [rect node] (s22) [right=0cm of s21] {$3$};
        \node [rect node] (s23) [right=0cm of s22] {$4$};
        \node [rect node] (s24) [right=0cm of s23] {$1$};

	\node [rect node] (s31) [below=0cm of s21] {$4$};
        \node [rect node] (s32) [right=0cm of s31] {$1$};
        \node [rect node] (s33) [right=0cm of s32] {$2$};
        \node [rect node] (s34) [right=0cm of s33] {$3$};

	\node [rect node] (s41) [below=0cm of s31] {$3$};
        \node [rect node] (s42) [right=0cm of s41] {$2$};
        \node [rect node] (s43) [right=0cm of s42] {$1$};
        \node [rect node] (s44) [right=0cm of s43] {$4$};
	
\end{tikzpicture}
\caption{Rule $150$}
\end{subfigure}%
\begin{subfigure}{.3\textwidth}
\centering
\begin{tikzpicture}
[->,auto,node distance=1.5cm,
       empt node/.style={font=\sffamily,inner sep=0pt,minimum size=0pt},
       rect node/.style={rectangle,draw,font=\sffamily,minimum size=0.7cm, inner sep=0pt, outer sep=0pt}]

	\node [rect node] (s11) {$1$};
        \node [rect node] (s12) [right=0cm of s11] {$2$};
        \node [rect node] (s13) [right=0cm of s12] {$3$};
        \node [rect node] (s14) [right=0cm of s13] {$4$};

	\node [rect node] (s21) [below=0cm of s11] {$2$};
        \node [rect node] (s22) [right=0cm of s21] {$1$};
        \node [rect node] (s23) [right=0cm of s22] {$4$};
        \node [rect node] (s24) [right=0cm of s23] {$3$};

	\node [rect node] (s31) [below=0cm of s21] {$3$};
        \node [rect node] (s32) [right=0cm of s31] {$4$};
        \node [rect node] (s33) [right=0cm of s32] {$1$};
        \node [rect node] (s34) [right=0cm of s33] {$2$};

	\node [rect node] (s41) [below=0cm of s31] {$4$};
        \node [rect node] (s42) [right=0cm of s41] {$3$};
        \node [rect node] (s43) [right=0cm of s42] {$2$};
        \node [rect node] (s44) [right=0cm of s43] {$1$};
	
\end{tikzpicture}
\caption{Rule $90$}
\end{subfigure}%
\begin{subfigure}{.3\textwidth}
\centering
\begin{tikzpicture}
[->,auto,node distance=1.5cm,
       empt node/.style={font=\sffamily,inner sep=0pt,minimum size=0pt},
       rect node/.style={rectangle,draw,font=\sffamily,minimum size=0.7cm, inner sep=0pt, outer sep=0pt}]

	\node [rect node] (s11) {1,1};
        \node [rect node] (s12) [right=0cm of s11] {$4,2$};
        \node [rect node] (s13) [right=0cm of s12] {$3,3$};
        \node [rect node] (s14) [right=0cm of s13] {$2,4$};

	\node [rect node] (s21) [below=0cm of s11] {$2,2$};
        \node [rect node] (s22) [right=0cm of s21] {$3,1$};
        \node [rect node] (s23) [right=0cm of s22] {$4,4$};
        \node [rect node] (s24) [right=0cm of s23] {$1,3$};

	\node [rect node] (s31) [below=0cm of s21] {$4,3$};
        \node [rect node] (s32) [right=0cm of s31] {$1,4$};
        \node [rect node] (s33) [right=0cm of s32] {$2,1$};
        \node [rect node] (s34) [right=0cm of s33] {$3,2$};

	\node [rect node] (s41) [below=0cm of s31] {$3,4$};
        \node [rect node] (s42) [right=0cm of s41] {$2,3$};
        \node [rect node] (s43) [right=0cm of s42] {$1,2$};
        \node [rect node] (s44) [right=0cm of s43] {$4,1$};
	
\end{tikzpicture}
\caption{Overlay}
\end{subfigure}%
\caption{Orthogonal Latin squares generated by BCA with rules 150 and 90,
  corresponding to the pair of coprime polynomials $1+X+X^2$ and $1+X^2$.}
\label{fig:r150-90}
\end{figure}

\section{Counting Coprime Polynomial Pairs}
\label{sec:enum}
By Corollary~\ref{cor:mols-lbca}, one can generate a set of $k$ MOLS of order
$q^{d-1}$ through LBCA of diameter $d$ by finding $k$ pairwise relatively prime
polynomials of degree $n=d-1$. The problem of counting the number of pairs of
relatively prime polynomials over finite fields has been investigated in several
papers (see e.g.~\cite{reifegerste00,allender03,benjamin07,hou09}). However,
notice that determining the number of pairs of linear CA inducing orthogonal
Latin squares entails counting only specific pairs of polynomials, namely those
whose constant term is not null. This is due to the requirement that the CA
local rules must be bipermutive. To the best of our knowledge, this particular
counting problem has not been considered in the literature, for which reason we
address it in this section.

Formally, for $n \ge 1$ let
\begin{equation}
  \label{eq:set-pol-1}
	S_n = \{ f(X) = a_0 + a_1 X + \dots + a_{n-1}X^{n-1} + X^n : a_0 \ne 0 \}
\end{equation}
be the set of monic polynomials in $\F_q[X]$ of degree $n$ and with nonzero
constant term. For all $n \ge 1$ we have that $s_n = |S_n| = (q-1)
q^{n-1}$. Moreover, we define $S_0 = \{1\}$ (the unique monic polynomial of
degree zero), and hence $s_0 = 1$.

Recall that the greatest common divisor of two polynomials $f, g \in \F_q[X]$ is
the unique monic polynomial of highest degree $h$ such that
\begin{align*}
	f(X) &= h(X) i(X) \enspace , \\
	g(X) &= h(X) j(X) \enspace ,
\end{align*}
for some $i, j \in \F_q[X]$. We remark that if $f, g \in S_n$, then
$i,j \in S_{e}$ for some $0 \le e \le n$ and $h \in S_{n-e}$.

Additionally, let us define the following subsets of $S_n^2 = S_n \times S_n$:
\begin{align*}
	A_n = \{ (f,g) \in S_n^2 : \gcd(f,g) = 1 \} \enspace , \\
	B_n = \{ (f,g) \in S_n^2 : \gcd(f,g) \ne 1 \} \enspace .
\end{align*}
In other words, $A_n$ and $B_n$ are respectively the sets of pairs of coprime
and non-coprime monic polynomials of degree $n$ with nonzero constant
term. Similarly as above, let $a_n = |A_n|$ and $b_n = |B_n|$. We are interested
in determining $a_n$, since by Theorem~\ref{thm:ols-lbca} the cardinality of
$A_n$ corresponds to the number of orthogonal Latin squares of order $q^{n}$
generated by LBCA pairs of diameter $d=n+1$. For $n=0$, we clearly have
$a_0 = 1$ and $b_0 = 0$. The following result characterizes $a_n$ for all
$n\ge 1$:
\begin{theorem}
  \label{thm:num-coprime}
  Let $n \ge 1$. Then, the number of pairs of coprime monic polynomials of
  degree $n$ with nonzero constant term is
  \begin{equation}
    \label{eq:num-coprime}
    a_n = q (q-1)^3 \frac{q^{2n-2} - 1}{q^2 - 1} + (q-1)(q-2) \enspace .
  \end{equation}
\end{theorem}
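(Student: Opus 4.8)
The plan is to set up a recurrence relating $a_n$ and $b_n$ via the total count $s_n^2 = (q-1)^2 q^{2n-2}$, and then solve it explicitly. The starting observation is the standard gcd decomposition: any pair $(f,g)\in S_n^2$ factors as $f = h i$, $g = h j$ with $h = \gcd(f,g)$ monic of some degree $n-e$ and $(i,j)$ a \emph{coprime} pair of monic polynomials of degree $e$. Crucially, as remarked in the text, because $f,g$ have nonzero constant term, so do $h$, $i$ and $j$; hence $h \in S_{n-e}$ and $(i,j) \in A_e$. Conversely, every such triple $(h,(i,j))$ with $h\in S_{n-e}$ and $(i,j)\in A_e$ yields a distinct pair in $S_n^2$. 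This gives the convolution identity
\begin{equation*}
  s_n^2 = \sum_{e=0}^{n} s_{n-e}\, a_e \enspace ,
\end{equation*}
where I use $s_0 = a_0 = 1$. Isolating the $e=n$ term (where $s_0 = 1$) yields
\begin{equation*}
  a_n = s_n^2 - \sum_{e=0}^{n-1} s_{n-e}\, a_e \enspace .
\end{equation*}

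Next I would turn this into a first-order recurrence by eliminating the sum. Writing $T_n := \sum_{e=0}^{n-1} s_{n-e} a_e$, so that $a_n = s_n^2 - T_n$, I compare $T_{n}$ with $T_{n-1}$: using $s_{n-e} = (q-1)q^{n-e-1} = q\cdot s_{n-1-e}$ for $e \le n-1$, one gets $T_n = q\, T_{n-1} + s_1 a_{n-1}$ after carefully splitting off the $e = n-1$ term (and treating the $n=1$ base case separately, where $T_1 = s_1 a_0 = q-1$, giving $a_1 = s_1^2 - s_1 = (q-1)^2 q^{n-1}|_{n=1} - (q-1) = (q-1)(q-2)$, which matches \eqref{eq:num-coprime} at $n=1$). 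Substituting $T_{n-1} = s_{n-1}^2 - a_{n-1}$ and $s_n^2 = q^2 s_{n-1}^2$ into $a_n = s_n^2 - T_n = q^2 s_{n-1}^2 - q(s_{n-1}^2 - a_{n-1}) - s_1 a_{n-1}$ produces the clean linear recurrence
\begin{equation*}
  a_n = q\, a_{n-1} + (q^2 - q)\, s_{n-1}^2 \enspace ,
\end{equation*}
valid for $n \ge 2$, with $a_1 = (q-1)(q-2)$.

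Finally I would solve this recurrence. It is a first-order linear recurrence with the inhomogeneous term $(q^2-q)s_{n-1}^2 = (q^2-q)(q-1)^2 q^{2n-4} = q^{-2}(q-1)^3 q^{2n-2}\cdot\frac{1}{q-1}$... more cleanly, $(q^2-q)s_{n-1}^2 = q(q-1)^3 q^{2n-4}$. Iterating from the base case, $a_n = q^{n-1} a_1 + \sum_{k=2}^{n} q^{n-k}\, q(q-1)^3 q^{2k-4}$. The geometric sum $\sum_{k=2}^{n} q^{n-k} q^{2k-4} = q^{n-4}\sum_{k=2}^{n} q^{k}$ evaluates, via $\sum_{k=2}^{n} q^k = q^2\frac{q^{n-1}-1}{q-1}$, to exactly $\frac{q^{2n-2}-q^{n}}{q-1}$ — and multiplying back the factors and combining with $q^{n-1}(q-1)(q-2)$, one checks the $q^n$ terms telescope so that the answer collapses to the closed form $q(q-1)^3\frac{q^{2n-2}-1}{q^2-1} + (q-1)(q-2)$ in \eqref{eq:num-coprime}. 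Rather than iterating, it is cleaner to verify \eqref{eq:num-coprime} directly by induction: check $n=1$, then substitute the proposed formula for $a_{n-1}$ into $a_n = q a_{n-1} + q(q-1)^3 q^{2n-4}$ and simplify, using $q\cdot\frac{q^{2n-4}-1}{q^2-1} + q^{2n-4} = \frac{q^{2n-2}-q + q^{2n-2}(q^2-1)/q^{\,?}}{\ldots}$ — i.e.\ the elementary identity $q\,\frac{q^{2n-4}-1}{q^2-1}+q^{2n-4}=\frac{q^{2n-2}-1}{q^2-1}$, which is exactly what makes the induction close.

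I expect the main obstacle to be bookkeeping rather than conceptual: getting the convolution identity exactly right at the boundary indices $e=0$ and $e=n$ (this is where the convention $S_0=\{1\}$, $a_0=1$ earns its keep), and then the algebraic simplification from the iterated sum to the stated closed form — in particular confirming that the stray $q^n$-type terms cancel so that only $q^{2n-2}$ and constant terms survive. If the direct-iteration route proves messy, the induction route above sidesteps it entirely, reducing everything to the single identity $q\,\frac{q^{2n-4}-1}{q^2-1}+q^{2n-4}=\frac{q^{2n-2}-1}{q^2-1}$.
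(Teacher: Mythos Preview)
Your approach is essentially the paper's: the same convolution $s_n^2 = \sum_{e=0}^{n} s_{n-e}\, a_e$, the same telescoping via $s_{n-e} = q\, s_{n-1-e}$, and the same reduction to a first-order recurrence which is then summed as a geometric series.

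There is, however, a bookkeeping slip that makes your stated recurrence wrong. From
\[
  a_n \;=\; q^2 s_{n-1}^2 - q\bigl(s_{n-1}^2 - a_{n-1}\bigr) - s_1\, a_{n-1}
\]
with $s_1 = q-1$, the $a_{n-1}$ terms combine as $q\, a_{n-1} - (q-1)\,a_{n-1} = a_{n-1}$, not $q\, a_{n-1}$. So the correct recurrence is
\[
  a_n \;=\; a_{n-1} + (q^2 - q)\, s_{n-1}^2 \;=\; a_{n-1} + (q-1)^3 q^{2n-3},
\]
which is exactly what the paper obtains. With this correction the induction step requires only the identity $\dfrac{q^{2n-4}-1}{q^2-1} + q^{2n-4} = \dfrac{q^{2n-2}-1}{q^2-1}$ (no leading factor of $q$), which is immediate; your stated identity $q\,\dfrac{q^{2n-4}-1}{q^2-1} + q^{2n-4} = \dfrac{q^{2n-2}-1}{q^2-1}$ is already false at $n=3$. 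Once this slip is fixed, your argument and the paper's coincide.
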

\begin{proof}
  Let us first settle the case $n=1$, for which we have
\begin{displaymath}
  S_1 = \{ f(X) = a_0 + X : a_0 \ne 0 \} \enspace .
\end{displaymath}
It is clear that $\gcd(f,g) = 1$ for any $f, g \in S_1$ if and only if
$f \ne g$. Thus, it follows that
\begin{displaymath}
  a_1 = (q-1)(q-2) \enspace ,
\end{displaymath}
which proves Equation~\eqref{eq:num-coprime} for the case $n=1$. For $n>1$,
remark first that
\begin{equation}
  \label{eq:sum-cop-ncop}
  s_n^2 = a_n + b_n \enspace .
\end{equation}
Moreover, any pair $(f,g)$ with $\deg( \gcd(f,g) ) = n-e$ ($0 \le e \le n-1$)
can be uniquely expressed as a pair $(h, (i,j))$, where $h \in S_{n-e}$ and
$(i,j) \in S_e$. Hence,
\begin{equation}
  \label{eq:ncop}
  b_n = \sum_{e=0}^{n-1} a_e s_{n-e} \enspace .
\end{equation}
Combining Equations~\eqref{eq:sum-cop-ncop} and~\eqref{eq:ncop} we have
\begin{align}
  \label{eq:an-0}
  a_n &= s_n^2 - \sum_{e=0}^{n-1} a_e s_{n-e} \enspace , \\
  \label{eq:an-1}
  a_{n-1} &= s_{n-1}^2 - \sum_{e=0}^{n-2} a_e s_{n-1-e} \enspace .
\end{align}
Multiplying both sides of~\eqref{eq:an-1} by $q$ we obtain
\begin{equation}
  \label{eq:qan-1}
  q a_{n-1} = q s_{n-1}^2 - \sum_{e=0}^{n-2} a_e (q s_{n-1-e}) \enspace .
\end{equation}
Remark that $qs_{n-1-e} = q(q-1)q^{n-2-e} = (q-1)q^{n-1-e} = s_{n-e}$. Hence,
Equation~\eqref{eq:qan-1} can be rewritten as
\begin{equation}
  \label{eq:qan-1-2}
  q a_{n-1} = qs_{n-1}^2 - \sum_{e=0}^{n-2} a_e s_{n-e} \enspace .
\end{equation}
By subtracting Equations~\eqref{eq:an-0} and~\eqref{eq:qan-1-2} we thus have
\begin{align}
  \nonumber
  a_n - q a_{n-1} &= s_n^2 - \sum_{e=0}^{n-1} a_e s_{n-e} - q s_{n-1}^2 +
                    \sum_{e=0}^{n-2} a_e s_{n-e} \\
  \label{eq:an-qan-1}
                  &= s_n^2 - q s_{n-1}^2 - a_{n-1}s_1 \enspace .
\end{align}
Since $s_n^2 = (q-1)^2q^{2n-2}$, while $qs_{n-1}^2 = (q-1)^2q^{2n-3}$ and
$s_1 = (q-1)$, Equation~\eqref{eq:an-qan-1} becomes
\begin{equation}
  a_n - q a_{n-1} = (q-1)^2(q^{2n-2} - q^{2n-3}) - a_{n-1}(q-1) \enspace ,
\end{equation}
from which it follows that
\begin{align}
  \nonumber
  a_n &= (q-1)^2 (q^{2n-2} - q^{2n-3}) + a_{n-1} \\
  \nonumber
      &= (q-1)^2 q^{2n-3} (q-1) + a_{n-1} \\
  \label{eq:an-qan-1-3}
      &= (q-1)^3 q q^{2n-4} + a_{n-1} \enspace .
\end{align}
By iterating the above procedure to determine the term $a_{n-1}$ from the
difference $a_{n-1}-qa_{n-2}$, the term $a_{n-2}$ from $a_{n-2} - qa_{n-3}$, and
so on until $a_2$ from $a_2 - qa_1$, one has
\begin{align}
  \nonumber
  a_n &= q (q-1)^3 \{q^{2n-4} + q^{2n-6} + \cdots + q^2 + 1\} + a_1 \\
  \nonumber
      &= q (q-1)^3 \sum_{t=0}^{n-2}q^{2t} + a_1 \\
  \nonumber
      &= q (q-1)^3 \frac{q^{2n-2} - 1}{q^2 - 1} + a_1 \\
  \label{eq:an-fin}
      &= q (q-1)^3 \frac{q^{2n-2} - 1}{q^2 - 1} + (q-1)(q-2) \enspace ,
\end{align}
from which we finally obtain the result.\qed
\end{proof}

\begin{remark}
  \label{rem:oeis}
  Notice that in Theorem~\ref{thm:num-coprime} we count all \emph{ordered}
  coprime polynomial pairs. To get the number of \emph{distinct} pairs, one
  simply has to divide Equation~\eqref{eq:num-coprime} by $2$, thus obtaining
  $\tilde{a}_n = \frac{1}{2}a_n$. In particular, for $q=2$ the formula for
  $\tilde{a}_n$ becomes
  \begin{equation}
    \label{eq:numb-cop-true}
    \tilde{a}_n = \frac{4^{n-1}-1}{3} \enspace .
  \end{equation}
  The first terms of this sequence for $n\ge 1$ are:
  \begin{equation}
    \label{eq:seq-oeis}
    \tilde{a}_n = 0, 1, 5, 21, 85, 341, 1365, \cdots
  \end{equation}
  which is a shifted version of OEIS sequence {\sc A002450}~\cite{a002450},
  defined by
\begin{equation}
\label{eq:a002450}
c_n = \frac{4^n - 1}{3} \enspace .
\end{equation}
It is easily seen that $c_n = \tilde{a}_{n+1}$, i.e. $c_n$ corresponds to the
number of distinct coprime pairs of polynomials of degree $n+1$ over $\F_2$
where both polynomials have nonzero constant term. We remark that sequence {\sc
  A002450} is known for several other combinatorial facts not related to
polynomials or orthogonal Latin squares arising from LBCA, for which we refer
the reader to~\cite{a002450}.
\end{remark}

\section{A Construction of MOLS based on LBCA}
\label{sec:const}
In this section, we tackle the question of determining the maximum number of
MOLS generated by linear bipermutive CA over $\F_q$ of a given order. Given
$n \in \N$ we consider in particular the following two problems:
\begin{problem}
\label{prob:max-mols}
What is the maximum number $N_{n}$ of LBCA over $\F_q$ of diameter $n+1$ whose
Latin squares are mutually orthogonal? From Section~\ref{sec:char}, this
actually amounts to compute the maximum number of monic pairwise coprime
polynomials of degree $n$ and nonzero constant term over $\F_q$.
\end{problem}
\begin{problem}
  \label{prob:numb-max-mols}
  What is the number $T_{n}$ of maximal sets of $N_{n}$ MOLS generated by
  LBCA?
\end{problem}
In the remainder of this section we present a construction for sets of MOLS
based on LBCA defined by pairwise coprime polynomials over $\F_q$. Moreover, we
solve Problem~\ref{prob:max-mols} by proving that the size of MOLS resulting
from this construction is maximal. We also determine the number $D_n$ of MOLS
that can be generated through this construction, and show that it is
asymptotically close to $T_n$.

Recall from Section~\ref{sec:enum} that $S_n$ denotes the set of all degree $n$
monic polynomials $f \in \F_q[X]$ with nonzero constant term
$a_0$. Additionally, let
\begin{equation}
\mathcal{M}_{n} = \{ R_{n} \subseteq S_{n}: \forall f\neq g \in
R_{n}, \gcd(f,g) = 1\} \enspace .
\end{equation}
In other words, $\mathcal{M}_{n}$ is the family of subsets of $S_{n}$ of
pairwise coprime polynomials. In order to solve Problem~\ref{prob:max-mols}, we
have to determine the maximal cardinality of the subsets in $\mathcal{M}_{n}$,
that is
\begin{equation}
N_{n} = \max_{R_{n} \in \mathcal{M}_{n}} |R_{n}| \enspace .
\end{equation}
On the other hand, for Problem~\ref{prob:numb-max-mols} we want to count how
many sets in $\mathcal{M}_{n}$ have cardinality $N_{n}$:
\begin{equation}
T_{n} = |\{R_{n} \in \mathcal{M}_{n} : |R_{n}| = N_{n}\}| \enspace .
\end{equation}
We begin by considering the set $\mathcal{I}_{n}$ of irreducible polynomials of
degree $n$ over $\F_q$ with nonzero constant term, all of which are trivially
pairwise coprime. Hence, $\mathcal{I}_{n}$ is included in all subsets having
maximum cardinality $N_{n}$. Denoting by $I_n$ the cardinality of
$\mathcal{I}_n$, one has that $I_0 = 1$ and $I_1 = q-1$, while for $n\ge 2$
$I_n$ is given by \emph{Gauss's formula}~\cite{gauss-irr}:
\begin{equation}
\label{eq:gauss}
I_{n} = |\mathcal{I}_{n}| = \frac{1}{n} \sum_{d|n} \mu(d)\cdot
q^{\frac{n}{d}} \enspace ,
\end{equation}
where $d$ ranges over all positive divisors of $n$ (including $1$ and $n$),
while $\mu$ denotes the \emph{M\"{o}bius function}. Let
$d = \varrho_1^{\alpha_1}\varrho_2^{\alpha_2}\cdots\varrho_k^{\alpha_k}$ be the
prime factorization of $d \in \N$. Then, $d$ is called \emph{square-free} (s.f.)
if $\alpha_i = 1$ for all $i \in \{1,\cdots, k\}$, i.e. if $d$ is not divisible
by any prime power with exponent higher than $1$. The M\"{o}bius function of $d$
is defined as:
\begin{equation}
  \label{eq:mobius}
  \mu(d) =
  \begin{cases}
    1  & \mbox{ , if $d=1$ or $d$ is s.f. and has an even number of prime
      factors} \\
    -1 & \mbox{ , if $d$ is s.f. and has an odd number of prime
      factors} \\
    0  & \mbox{ , if $d$ is not s.f.}
  \end{cases}
\end{equation}
We thus have that 
\begin{equation}
\label{eq:first-low-bd}
N_{n} \ge I_{n} \enspace .
\end{equation}
In order to refine this lower bound, we have to determine how many other
(reducible) polynomials of degree $n$ one can add to $\mathcal{I}_{n}$ so that
the resulting set only includes pairwise coprime polynomials. To this end, we
first need a side result which shows that the sequence of the numbers of monic
irreducible polynomials is non-decreasing in the degree $n$. As a preliminary
remark, observe that any polynomial $f$ in $S_n$ is either irreducible and hence
belongs to $\mathcal{I}_n$, or all its irreducible factors belong to
$\mathcal{J}_n = \bigcup_{k =1}^{\lfloor \frac{n}{2} \rfloor} \mathcal{I}_k$.
\begin{lemma}
\label{lm:irr-numb}
For all $q\ge 2$ powers of a prime number and $n\ge 1$, $I_n$ is a
non-decreasing function of $n$.
\end{lemma}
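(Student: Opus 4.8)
The plan is to reduce $I_n$ to the ordinary count $M(n)$ of monic irreducible polynomials of degree $n$ over $\F_q$ (so $M(1)=q$, but $M(n)=I_n$ for $n\ge 2$), to pin down $M(n)$ with sharp two‑sided bounds coming from the subfield structure of $\F_{q^n}$, and then to dispatch finitely many small degrees by hand. Note first that for $n\ge 2$ an irreducible polynomial of degree $n$ cannot have zero constant term (else $X$ would divide it), so $I_n=M(n)=\frac1n\sum_{d\mid n}\mu(d)q^{n/d}$, whereas $I_0=1$ and $I_1=q-1$ by definition. Accordingly the argument splits into a "tail" part for $n\ge 4$ that is uniform in $q$, plus the finitely many transitions $I_0\le I_1\le I_2\le I_3\le I_4$.

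For the tail I would first establish that for every $n\ge 1$,
\[
q^n-\frac{q^{\lfloor n/2\rfloor+1}}{q-1} < nM(n) \le q^n .
\]
This follows by counting the elements of $\F_{q^n}$ of degree exactly $n$ over $\F_q$: there are exactly $nM(n)$ of them, and the remaining $q^n-nM(n)$ lie in the union $\bigcup_{p\mid n}\F_{q^{n/p}}$ of maximal proper subfields, which has $\bigl|\bigcup_{p\mid n}\F_{q^{n/p}}\bigr|\le\sum_{p\mid n}q^{n/p}$ elements; since the exponents $n/p$ are \emph{distinct} positive integers not exceeding $\lfloor n/2\rfloor$, this sum is bounded by $\sum_{j=1}^{\lfloor n/2\rfloor}q^j<\frac{q^{\lfloor n/2\rfloor+1}}{q-1}$. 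It is essential here that the exponents are distinct: the crude bound $\sum_{m\mid n,\,m<n}q^m\le n\,q^{n/2}$ is too weak to finish the case $q=2$.

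Next, for $n\ge 2$ I would feed $I_n=M(n)\le q^n/n$ together with $I_{n+1}=M(n+1)>\frac1{n+1}\bigl(q^{n+1}-\frac{q^{\lfloor(n+1)/2\rfloor+1}}{q-1}\bigr)$ into the desired inequality $I_n\le I_{n+1}$; after clearing denominators and dividing by $q^n$ this reduces to the elementary inequality
\[
\frac{n\,q^{1-\lfloor n/2\rfloor}}{q-1}\ \le\ n(q-1)-1 .
\]
For $q\ge 3$ and $n\ge 2$ the left-hand side is at most $n/(q-1)\le n/2$ while the right-hand side is at least $2n-1$, so it holds; for $q=2$ and $n\ge 4$ one has $\lfloor n/2\rfloor\ge 2$, so the left-hand side is at most $n/2\le n-1$, the right-hand side. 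Hence $I_n\le I_{n+1}$ for all $n\ge 4$ regardless of $q$ (and already for all $n\ge 2$ when $q\ge 3$).

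It remains to check $I_0\le I_1\le I_2\le I_3\le I_4$ directly from $I_0=1$, $I_1=q-1$, $I_2=\tfrac{q^2-q}{2}$, $I_3=\tfrac{q^3-q}{3}$, $I_4=\tfrac{q^4-q^2}{4}$; each difference has a visibly nonnegative factorisation, e.g.\ $2(I_2-I_1)=(q-1)(q-2)$, $6(I_3-I_2)=q(q-1)(2q-1)$ and $12(I_4-I_3)=q(q-1)(q+1)(3q-4)$, all $\ge 0$ for $q\ge 2$. Combining this with the tail argument gives $I_n\le I_{n+1}$ for every $n\ge 0$. I expect the genuine obstacle to be the case $q=2$: there the bounds are tightest, the tail reduction only engages from $n=4$ on, and one must treat $n=2,3$ separately — indeed $I_2=1$ and $I_3=2$, so the sequence begins $1,1,1,2,\dots$ and is constant before it first increases, which is exactly why the asymptotic estimate alone cannot cover the smallest degrees.
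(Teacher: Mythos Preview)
Your proof is correct and follows essentially the same strategy as the paper: handle the first few degrees by direct computation, then for larger $n$ sandwich $I_n$ between $q^n/n$ and $\frac{1}{n}\bigl(q^n-\text{(geometric tail up to }q^{\lfloor n/2\rfloor})\bigr)$ and compare consecutive terms. The only cosmetic differences are that you obtain the lower bound via the subfield count in $\F_{q^n}$ rather than by manipulating Gauss's formula directly, and your tail argument engages from $n=4$ (and from $n=2$ when $q\ge3$) whereas the paper's version starts at $n=5$; the underlying estimates and the overall architecture are the same.
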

\begin{proof}
  We want to show that $I_n \ge I_{n-1}$ for all $n \ge 2$. Note that
  $I_1 = q-1$ since we do not consider the polynomial $X$ (its constant term
  being null), while $I_n$ is given by Gauss's formula for $n \ge 2$.

  The claim is easily proved for $n \le 4$, since
  \begin{align*}
	I_1 &= q - 1 \enspace , \\
	I_2 &= \frac{1}{2}(q^2-q) = \frac{q}{2}I_1 \ge I_1 \enspace ,\\
	I_3 &= \frac{1}{3}(q^3 - q) = \frac{2(q+1)}{3}I_2 \ge I_2 \enspace ,\\
	I_4 &= \frac{1}{4}(q^4 - q^2) = \frac{3q}{4} I_3 \ge I_3 \enspace .
  \end{align*}
  We now assume $n \ge 5$. We first prove that
  $I_n \ge \frac{1}{n} (q^n - q^{n-2})$. It is easily checked for $n=5$, since
  $I_5 = \frac{1}{5}(q^5 - q)$; for $n \ge 6$, consider the sum
  \begin{equation}
    \label{eq:sum-1}
    q^{\lfloor \frac{n}{2} \rfloor} + q^{\lfloor \frac{n}{2} \rfloor - 1} +
    \cdots + q + 1 = \sum_{i=0}^{\lfloor \frac{n}{2} \rfloor} q^i = \frac{q^{\lfloor
        \frac{n}{2} \rfloor + 1} -1 }{q - 1} \enspace . 
  \end{equation}
  Remark that, for all $d|n$ with $d\neq 1$, the term $q^{\frac{n}{d}}$ in the
  sum of Gauss's formula occurs in the sum of Equation~\eqref{eq:sum-1},
  i.e. for $i = \frac{n}{d}$. Since in Gauss's formula one always adds or
  subtracts the term $q^{\frac{n}{d}}$ depending on the value of $\mu(d)$, by
  Equations~\eqref{eq:gauss} and~\eqref{eq:sum-1} we have the following
  inequality:
  \begin{equation}
    \label{eq:ineq-1}
    I_n \ge \frac{1}{n} \left( q^n - \sum_{i=0}^{\lfloor \frac{n}{2} \rfloor}
      q^i \right) = \frac{1}{n} \left( q^n - \frac{q^{\lfloor \frac{n}{2}
          \rfloor + 1} -1 }{q - 1} \right) \enspace ,
  \end{equation}
  from which it follows that
  \begin{align}
    \nonumber
    I_n &\ge \frac{1}{n} (q^n - q^{\left\lfloor \frac{n}{2} \right\rfloor + 1})\\
    \label{eq:lower-bound-in}
        & \ge \frac{1}{n} (q^n - q^{n-2}) \enspace .
  \end{align}
  We now prove that $I_{n-1} \le \frac{1}{n-1}q^{n-1}$. Similarly to the
  previous inequality, let us denote by $p$ the smallest prime divisor of
  $n-1$. Then, in the sum of Gauss's formula for $I_{n-1}$ we subtract
  $q^{\frac{n-1}{p}}$, since $\mu(p) = -1$. Consider now the sum
  \begin{equation}
    \label{eq:sum-2}
    q^{\frac{n-1}{p} - 1} + q^{\frac{n-1}{p} - 2} +
    \cdots + q + 1 = \sum_{i=0}^{\frac{n-1}{p} - 1} q^i =
    \frac{q^{\frac{n-1}{p}} - 1}{q - 1} \enspace . 
  \end{equation}
  Again, each term $q^{\frac{n-1}{d}}$ in Gauss's formula for $I_{n-1}$ occurs
  in~\eqref{eq:sum-2} for $i = \frac{n-1}{d}$. Thus, the following inequality
  holds:
  \begin{equation}
    I_{n-1} \le \frac{1}{n-1} \left( q^{n-1} - q^{\frac{n-1}{p}} 
             + \frac{q^{\frac{n-1}{p}} - 1}{q - 1} \right) \enspace .
  \end{equation}
  Therefore,
  \begin{equation}
    \label{eq:upper-bound-in}
        I_{n-1} \le \frac{1}{n-1}q^{n-1} \enspace .
  \end{equation}
  Combining the lower bound in~\eqref{eq:lower-bound-in} and the upper bound
  in~\eqref{eq:upper-bound-in}, we obtain
  \begin{align}
	I_n &\ge I_{n-1} \frac{n-1}{n} \cdot \frac{q^n - q^{n-2}}{q^{n-1}}\\
            &=  I_{n-1} \frac{n-1}{n} \cdot \left(q - \frac{1}{q} \right) \enspace .
  \end{align}
  Thus, since $q\ge 2$ and $n>5$, it follows that
  \begin{equation}
	I_n \ge I_{n-1} \frac{4}{5} \cdot \frac{3}{2} \ge I_{n-1} \enspace .
  \end{equation}
\qed
\end{proof}

Consider now the following construction for a family of pairwise coprime
polynomials parameterized on the degree $n$:

\begin{description}
\item[{\sc Construction-Irreducible}$(n)$]
\item[Initialization:] Initialize set $\mathcal{P}_{n}$ to $\mathcal{I}_{n}$
\item[Loop:] For all $1 \le k \le \left \lfloor \frac{n}{2} \right \rfloor$ do:
  \begin{enumerate}
    \item Build set $\mathcal{P}'_{k}$ by multiplying each polynomial in
      $\mathcal{I}_{k}$ with a distinct polynomial in $\mathcal{I}_{n-k}$
    \item Add set $\mathcal{P}'_{k}$ to $\mathcal{P}_{n}$
  \end{enumerate}
\item[Output:] return $\mathcal{P}_{n}$
\end{description}

Hence, set $\mathcal{P}_n$ is constructed by first adding all irreducible
polynomials of degree $n$, then by adding the set of all irreducible polynomials
of degree $1$ multiplied by as many irreducible polynomials of degree $n-1$, and
so on. In particular, notice that step 1 in the loop of {\sc
  Construction-Irreducible} is possible since by Lemma~\ref{lm:irr-numb} one has
that $I_{n-k} \ge I_k$ for all $k\le\left \lfloor \frac{n}{2} \right
\rfloor$. Further, all polynomials added to $\mathcal{P}_n$ through {\sc
  Construction-Irreducible} are pairwise coprime, since they all have distinct
irreducible factors.

Remark that the procedure {\sc Construction-Irreducible} can be iterated only up
to $k\le\left \lfloor \frac{n}{2} \right \rfloor$, because by symmetry the
irreducible polynomials of degree $n-k$ with $k>\left \lfloor \frac{n}{2} \right
\rfloor$ correspond to those of degree $k\le\left \lfloor \frac{n}{2} \right
\rfloor$. Notice also that, when $n$ is even, the last step of the procedure
consists of squaring all irreducible polynomials of degree $\frac{n}{2}$.

Hence, we have shown that the set $\mathcal{P}_n$ which is generated by
procedure {\sc Construction-Irreducible} is indeed a member of the family
$\mathcal{M}_{n}$. The cardinality of such set is given by
\begin{equation}
\label{eq:card-constr}
C_{n} = |\mathcal{P}_{n}| = I_{n} + \sum_{k=1}^{\left \lfloor \frac{n}{2} \right
  \rfloor} I_{k} \enspace .
\end{equation}
In fact, beside the initial step when one adds all irreducible polynomials of
degree $n$ to $\mathcal{P}_n$, in each iteration $k$ of the loop the number of
polynomials that one can obtain by multiplying two irreducible factors is
bounded by the number of irreducible polynomials of degree $k$, which is
$I_{k}$. We have thus obtained the following result, which gives a more precise
lower bound on $N_{n}$:
\begin{equation}
  \label{eq:lower-bound-2}
  N_{n} \ge C_{n} \enspace .
\end{equation}
A natural question arising from Inequality~\eqref{eq:lower-bound-2} is whether
the above construction is optimal, i.e. if the maximum number of pairwise
coprime polynomials $N_{n}$ is actually equal to $C_{n}$. In the next theorem,
we prove that the MOLS produced by {\sc Construction-Irreducible} are indeed
maximal, and we characterize the families of $\mathcal{T}_{n}$.
\begin{theorem}
  \label{thm:char-max-mols}
  For any $n$ and $q$, the maximum number of MOLS generated by LBCA of diameter
  $d = n+1$, or equivalently the maximum number of pairwise coprime monic
  polynomials of degree $n$ and nonzero constant term is:
  \begin{equation}
    \label{eq:char-max-mols}
    N_n = I_n + \sum_{k=1}^{\lfloor \frac{n}{2} \rfloor} I_k \enspace .
  \end{equation}
  Moreover, let $A \subseteq S_n$. Then, $A \in \mathcal{T}_n$ if and only if
  the following hold:
  \begin{enumerate}
  \item $A$ contains $\mathcal{I}_n$;

  \item if $n$ is even then $A$ contains $\{ g^2 : g \in \mathcal{I}_{n/2} \}$;

  \item for every $g \in \mathcal{I}_k$ with $k < n/2$, there exists a unique
    $f \in A$ such that $g | f$. This $f$ is either of the form $f = g^a$ with
    $a = n / k$, or of the form $f = g^b h$ where $b k < n/2$ and
    $h \in \mathcal{I}_{n - bk}$, in which case $h$ does not divide any other
    $f' \in A$.
  \end{enumerate}
\end{theorem}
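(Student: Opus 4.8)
The lower bound $N_n \ge C_n$ is already in hand from {\sc Construction-Irreducible} together with~\eqref{eq:lower-bound-2}, so the plan is to prove the matching upper bound and then read off the structure of the extremal families. The guiding idea is to examine, for an arbitrary $A \in \mathcal{M}_n$, which irreducible polynomials of degree at most $\lfloor n/2 \rfloor$ divide the members of $A$. I would split $A = A' \sqcup A''$ into its irreducible elements (all of degree $n$, so $|A'| \le I_n$) and its reducible elements. The elementary fact to exploit is that a reducible $f \in S_n$ always has an irreducible factor of degree $\le \lfloor n/2 \rfloor$, namely its smallest-degree irreducible factor, since $f$ splits into at least two irreducible factors counted with multiplicity. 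Picking one such factor $g_f$ for each $f \in A''$ defines a map $A'' \to \bigcup_{k=1}^{\lfloor n/2 \rfloor} \mathcal{I}_k$ that is injective by pairwise coprimality ($g_{f_1} = g_{f_2}$ would divide both $f_1$ and $f_2$). Hence $|A''| \le \sum_{k=1}^{\lfloor n/2 \rfloor} I_k$, so $|A| \le I_n + \sum_{k=1}^{\lfloor n/2 \rfloor} I_k = C_n$, which combined with the known lower bound gives $N_n = C_n$, i.e.~\eqref{eq:char-max-mols}.

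For the necessity of conditions 1--3, I would take $A \in \mathcal{T}_n$ and note that $|A| = N_n$ forces equality in both inequalities above. Equality in $|A'| \le I_n$ gives $A' = \mathcal{I}_n$, which is condition 1. Writing $D(f)$ for the set of distinct irreducible factors of $f$ of degree $\le \lfloor n/2 \rfloor$, the chain $|A''| \le \sum_{f \in A''} |D(f)| \le \sum_{k=1}^{\lfloor n/2 \rfloor} I_k = |A''|$ then forces $|D(f)| = 1$ for every reducible $f \in A$ and shows that the sets $D(f)$ partition $\bigcup_{k=1}^{\lfloor n/2 \rfloor} \mathcal{I}_k$; in particular every irreducible $g$ of degree $\le \lfloor n/2 \rfloor$ divides exactly one element of $A$. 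Fixing such a $g \in \mathcal{I}_k$ dividing $f$, I would write $f = g^b m$ with $b \ge 1$ and $\gcd(g,m) = 1$; then every irreducible factor of $m$ has degree $> \lfloor n/2 \rfloor$ (otherwise $|D(f)| \ge 2$), and since $\deg m = n - bk < n$ this forces $m = 1$ or $m$ irreducible. In the first case $f = g^{n/k}$; in the second $m$ is an irreducible $h$ of degree $n - bk$, which must exceed $n/2$, so $bk < n/2$ and $h \in \mathcal{I}_{n-bk}$, with $h$ dividing no other element by coprimality. When $n$ is even and $k = n/2$ the second alternative is impossible ($\deg h$ would be $\le n/2$), so $b = 2$ and $f = g^2$; this is condition 2, while the case $k < n/2$ is precisely condition 3.

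For sufficiency I would assume conditions 1--3 and verify that the prescribed polynomials --- the set $\mathcal{I}_n$, the squares $g^2$ for $g \in \mathcal{I}_{n/2}$ (when $n$ is even), and the unique divisor $f_g$ of each $g \in \mathcal{I}_k$ with $k < n/2$ --- are pairwise distinct (their multisets of irreducible factors differ), pairwise coprime (distinct degree-$n$ irreducibles are coprime and coprime to every listed reducible polynomial; two listed reducible polynomials have disjoint factor sets because their small-degree factors differ and, in the form $g^b h$, the large factor $h$ is stipulated to divide nothing else), and exactly $I_n + \sum_{k=1}^{\lfloor n/2 \rfloor} I_k = N_n$ in number. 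It then remains to rule out further elements of $A$: any such element would be reducible with a smallest irreducible factor $g$ of degree $\le \lfloor n/2 \rfloor$, and for $\deg g < n/2$ the uniqueness clause of condition 3 identifies it with $f_g$, while for $\deg g = n/2$ coprimality with the already-present $g^2$ identifies it with $g^2$. Hence $A$ is precisely the prescribed family and lies in $\mathcal{T}_n$.

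The step I expect to be the real work is the middle one: converting the cardinality equality into the claim that each reducible member of an extremal family has a \emph{single} irreducible factor of small degree, and then pushing the degree identity $bk + \deg m = n$ far enough to force $m$ trivial or irreducible and to pin down the threshold $bk < n/2$ --- with the boundary $k = n/2$ for even $n$ needing separate care. By contrast, Lemma~\ref{lm:irr-numb} enters only to guarantee that {\sc Construction-Irreducible}, and hence the lower bound~\eqref{eq:lower-bound-2}, is well defined; the upper bound and the characterization do not otherwise rely on it.
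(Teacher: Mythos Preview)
Your argument is correct and follows the same route as the paper: assign to each reducible $f \in A$ its irreducible factor of smallest degree to obtain an injection into $\mathcal{J}_n = \bigcup_{k \le \lfloor n/2 \rfloor} \mathcal{I}_k$, read off the upper bound $|A| \le C_n$, and in the extremal case exploit the resulting bijection to derive conditions 1--3. Your treatment is in fact more explicit than the paper's---notably the chain $|A''| \le \sum_{f} |D(f)| \le |\mathcal{J}_n|$ forcing $|D(f)|=1$, and the sufficiency direction, which the paper dispatches in one line as ``easily checked''---but the underlying ideas coincide.
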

\begin{proof}
  We first determine the value of $N_n$. By inequality~\eqref{eq:lower-bound-2}
  we have that $N_n \ge C_n$. Conversely, let $A \in \mathcal{M}_n$ be a maximum
  collection of mutually coprime polynomials in $S_n$, i.e. with cardinality
  $N_n$. Clearly, $A$ must contain all irreducible polynomials $\mathcal{I}_n$,
  so let $B = A \setminus \mathcal{I}_n$ be the set of reducible polynomials in
  $A$. For any $f \in B$, denote the irreducible polynomial of lowest degree in
  the factorization of $f$ as $T_f$ (if there are several, choose the first one
  in lexicographic order). Note that $T_f$ has degree at most $n/2$. Now if
  $f,g \in B$ satisfy $T_f = T_g$, then $f$ and $g$ are not coprime, therefore
  the map $f \mapsto T_f$ is an injection from $B$ to $\mathcal{J}_n$. Thus
  $|A| \le C_n$ and $N_n = C_n$.

  We now characterize the families of cardinality $N_n$. We begin by showing
  that any such family must satisfy the three properties of the
  theorem. Firstly, as seen above, such a family $A$ must contain
  $\mathcal{I}_n$, so let us focus on $B = A \setminus \mathcal{I}_n$. This
  time, the mapping $f \mapsto T_f$ is a bijection from $B$ to $\mathcal{J}_n$,
  hence let $g \mapsto F_g$ be its inverse. Secondly, if
  $g \in \mathcal{I}_{n/2}$, then $F_g = gh$ for some $h \in
  \mathcal{I}_{n/2}$. If $h \ne g$, then $F_h \ne F_g$ but $\gcd(F_g, F_h) = h$,
  which violates coprimality; thus $h = g$ and $F_g = g^2$. Thirdly, if
  $g \in \mathcal{I}_k$, then either $F_g = g^a$ for $a = n/k$ or
  $F_g = g^b H_g$ for $bk < n$ and $\gcd(H_g, g) = 1$. If $H_g$ is reducible,
  then its factor of lowest degree $h \in \mathcal{J}_n$ is a common divisor of
  $F_h$ and $F_g$, which again violates coprimality. Thus
  $H_g \in \mathcal{I}_{n - bk}$ for $bk < n/2$. Finally, if $H_g = H_{g'}$ for
  another $g' \in \mathcal{J}_n$, then again coprimality is
  violated. Conversely, it is easily checked that any family satisfying all
  three properties is a family of $N_n$ coprime polynomials in
  $S_n$. \qed
\end{proof}

We now determine how many maximal sets of pairwise coprime polynomials of degree
$n$ one can obtain through {\sc Construction-Irreducible}, thus providing a
lower bound on $T_n$. In particular, this corresponds to the case of families
$A \in \mathcal{T}_n$ where the polynomial $f \in A$ in the third condition of
Theorem~\ref{thm:char-max-mols} is of the form $f = gh$ (i.e.\ $b=1$) and
$h \in \mathcal{I}_{n - bk}$. Moreover, we show that this lower bound is
asymptotically close to the actual value of $T_n$. Before proving this result,
we first need the following asymptotic estimate of $I_n$:

\begin{lemma}
  \label{lm:gauss-est}
  Let $I_n$ be defined as in Equation~\eqref{eq:gauss}. Then, as $n$ tends to
  infinity,
  \begin{equation}
    \label{eq:est}
    I_n = \frac{1}{n} \left(q^n - \mathcal{O}\left(q^{\frac{n}{2}}\right)\right)
    \enspace .
  \end{equation}
\end{lemma}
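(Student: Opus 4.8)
The plan is to read off the estimate directly from Gauss's formula~\eqref{eq:gauss}, which for $n \ge 2$ reads $I_n = \frac{1}{n}\sum_{d \mid n}\mu(d)\,q^{n/d}$. (The exceptional convention $I_1 = q-1$ used elsewhere plays no role here, since the claim is purely asymptotic; and for every $n \ge 2$ a monic irreducible polynomial of degree $n$ automatically has nonzero constant term, so $\mathcal{I}_n$ coincides with the full set of monic irreducibles of degree $n$ and Gauss's formula applies verbatim.)

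First I would isolate the term $d = 1$, which contributes $\mu(1)\,q^n = q^n$, and collect everything else into an error term $E_n = \sum_{d \mid n,\, d \ge 2}\mu(d)\,q^{n/d}$, so that $I_n = \frac{1}{n}\bigl(q^n + E_n\bigr)$. It then suffices to show $E_n = \mathcal{O}\bigl(q^{n/2}\bigr)$: the sign is immaterial inside $\mathcal{O}(\cdot)$, so absorbing it yields the stated form $I_n = \frac{1}{n}\bigl(q^n - \mathcal{O}(q^{n/2})\bigr)$.

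To bound $E_n$, I would use $|\mu(d)| \le 1$ together with the observation that, for $d \ge 2$, the exponents $n/d$ are distinct positive integers not exceeding $\lfloor n/2 \rfloor$. Hence
\begin{equation*}
  |E_n| \;\le\; \sum_{d \mid n,\, d \ge 2} q^{n/d}
        \;\le\; \sum_{j=1}^{\lfloor n/2 \rfloor} q^{j}
        \;=\; \frac{q^{\lfloor n/2 \rfloor + 1} - q}{q - 1}
        \;\le\; \frac{q}{q-1}\, q^{\lfloor n/2 \rfloor}
        \;\le\; 2\, q^{n/2} \enspace ,
\end{equation*}
where the last inequality uses $q \ge 2$. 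This is exactly $\mathcal{O}(q^{n/2})$, and substituting back into $I_n = \frac{1}{n}(q^n + E_n)$ completes the argument.

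There is no genuine obstacle here; the one point deserving care is the final summation. A naive estimate — each error term is at most $q^{n/2}$, and there are at most $n$ of them — would only give $\mathcal{O}(n\, q^{n/2})$, which carries a spurious factor of $n$; replacing the count-of-divisors bound by the geometric sum over all integer exponents up to $\lfloor n/2 \rfloor$ is what removes that factor and delivers the clean bound needed for the subsequent comparison of $D_n$ with $T_n$.
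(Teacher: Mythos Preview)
Your argument is correct. Both you and the paper bound the error terms via a geometric series, but the decompositions differ slightly. The paper extracts \emph{two} terms from Gauss's formula, $d=1$ and $d=p$ (the smallest prime divisor of $n$), then bounds the remaining sum by $2q^{n/(p+1)}$ via the geometric series $\sum_{i\ge 0} q^{n/(p+1)-i}$, and finally argues that $q^{n/p} - 2q^{n/(p+1)} = \mathcal{O}(q^{n/p}) = \mathcal{O}(q^{n/2})$. You extract only $d=1$ and bound all remaining terms at once by the finite geometric sum $\sum_{j=1}^{\lfloor n/2\rfloor} q^j$, using that the exponents $n/d$ for $d\ge 2$ are distinct integers in $\{1,\dots,\lfloor n/2\rfloor\}$. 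Your route is shorter: it avoids the separate treatment of the $d=p$ term and the subsequent asymptotic comparison, at the cost of not recording the intermediate (and unused) refinement $\mathcal{O}(q^{n/p})$. Your closing remark about why the naive ``$n$ terms each $\le q^{n/2}$'' bound is insufficient is also a nice touch.
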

\begin{proof}
  Let us rewrite Equation~\eqref{eq:gauss} by extracting the terms $d=1$ and
  $d=p$ from the sum, where $p$ is the smallest prime divisor of $n$. Since
  $\mu(1)=1$ and $\mu(p) = -1$, we have
  \begin{equation}
    \label{eq:ext-sum}
    I_n = \frac{1}{n} \left(q^n - q^{\frac{n}{p}} + \sum_{d|n: d\neq 1,p} \mu(d)
      \cdot q^{\frac{n}{d}}\right) \enspace .
  \end{equation}
  The smallest divisor of $n$ which is strictly greater than $p$ can be at most
  $p+1$. Thus, each term in the sum of Equation~\eqref{eq:ext-sum} is limited in
  absolute value by $q^{\frac{n}{p+1}}$. In particular, as $d$ grows the value
  $q^{\frac{n}{d}}$ decreases, hence we can bound the sum in~\eqref{eq:ext-sum}
  with the geometric series $\sum_{i=0}^\infty q^{\frac{n}{p+1}-i}$:
  \begin{equation}
    \label{eq:bound-geom}
    \sum_{d|n: d\neq 1,p} \mu(d) \cdot q^{\frac{n}{d}} \le \sum_{i=0}^\infty
    q^{\frac{n}{p+1}-i} = q^{\frac{n}{p+1}} \sum_{i=0}^\infty q^{-i} \enspace .
  \end{equation}
  Since $q\ge 2$, we have that $\sum_{i=0}^\infty q^{-i} \le 2$. Thus, we obtain
  \begin{equation}
    \label{eq:bound-geom-2}
    \sum_{d|n: d\neq 1,p} \mu(d) \cdot q^{\frac{n}{d}} \le 2\cdot
    q^{\frac{n}{p+1}} \enspace .
  \end{equation}
  Consider now the difference $q^{\frac{n}{p}} - 2\cdot q^{\frac{n}{p+1}}$:
  \begin{equation}
    \label{eq:diff-bound}
    q^{\frac{n}{p}} - 2\cdot q^{\frac{n}{p+1}} = q^{\frac{n}{p}} \left( 1 -
      2\cdot q^{\frac{n}{p+1}-\frac{n}{p}}\right) = q^{\frac{n}{p}} \left( 1 -
      2\cdot q^{-\frac{n}{p(p+1)}}\right) \enspace .
  \end{equation}
  Clearly, $q^{-\frac{n}{p(p+1)}} \rightarrow 0$ for $n \rightarrow
  \infty$. Hence, we have that
  $q^{\frac{n}{p}} - 2\cdot q^{\frac{n}{p+1}} =
  \mathcal{O}\left(q^{\frac{n}{p}}\right)$, and by
  Inequality~\eqref{eq:bound-geom-2} it follows that
  \begin{equation}
    \label{eq:diff-bound-2}
    q^{\frac{n}{p}} - \sum_{d|n: d\neq 1,p} \mu(d) \cdot q^{\frac{n}{d}} =
    \mathcal{O}\left(q^{\frac{n}{p}}\right) \enspace .
  \end{equation}
  Therefore, Equation~\eqref{eq:gauss} can be rewritten as
  \begin{equation}
    I_n = \left(q^n - \mathcal{O}\left(q^{\frac{n}{p}}\right)\right) = \left(q^n
      - \mathcal{O}\left(q^{\frac{n}{2}}\right)\right) \enspace ,
  \end{equation}
  where the rightmost equality follows from the fact that $p\ge 2$ for all
  $n \in \N$. \qed
\end{proof}

We can now prove our lower bound on $T_n$. In what follows, we denote by $D_n$
the number of maximal sets produced by {\sc Construction-Irreducible}.
\begin{theorem}
  \label{thm:lower-bound}
  For all $n$, it holds that
  \begin{displaymath}
    D_n = \prod_{k=1}^{ \lceil \frac{n}{2} - 1 \rceil} \frac{ I_{n-k} ! }{
      (I_{n-k} - I_k)! } \enspace .
  \end{displaymath}
  Moreover, as $n$ tends to infinity, we have
  \begin{align*}
    \log_q D_n &= \Theta(q^{\frac{n}{2}}) \enspace .\\
    \log_q T_n &= \log_q D_n + \mathcal{O}(q^{\frac{n}{3}}) = \Theta( q^{\frac{n}{2}} ) \enspace .
  \end{align*}
\end{theorem}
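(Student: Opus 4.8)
The plan is to handle the three claims in sequence, starting with the exact count $D_n$, then its logarithmic asymptotics, and finally the comparison with $T_n$.

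First I would count $D_n$ directly from the description of {\sc Construction-Irreducible}. The initialization contributes no choices: $\mathcal{I}_n$ is fixed. In iteration $k$ (for $1 \le k \le \lceil n/2 - 1 \rceil$, i.e.\ strictly below $n/2$), step~1 builds $\mathcal{P}'_k$ by pairing each of the $I_k$ polynomials in $\mathcal{I}_k$ with a \emph{distinct} polynomial in $\mathcal{I}_{n-k}$; since $I_{n-k} \ge I_k$ by Lemma~\ref{lm:irr-numb}, the number of such injections is the falling factorial $I_{n-k}(I_{n-k}-1)\cdots(I_{n-k}-I_k+1) = I_{n-k}!/(I_{n-k}-I_k)!$. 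When $n$ is even, the last iteration $k = n/2$ squares all of $\mathcal{I}_{n/2}$ and offers no freedom, which is exactly why the product runs only up to $\lceil n/2 - 1\rceil$. Different choices across different iterations produce different sets (the iterations touch disjoint irreducible factors of degree $k$ and $n-k$ respectively), so the total is the product $D_n = \prod_{k=1}^{\lceil n/2 - 1\rceil} I_{n-k}!/(I_{n-k}-I_k)!$, as claimed.

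Next I would obtain $\log_q D_n = \Theta(q^{n/2})$. Taking logs turns the product into $\sum_{k} \log_q\bigl(I_{n-k}!/(I_{n-k}-I_k)!\bigr) = \sum_k \sum_{j=0}^{I_k-1}\log_q(I_{n-k}-j)$. The dominant term is $k=1$: there $I_{n-1}!/(I_{n-1}-I_1)! = I_{n-1}!/(I_{n-1}-(q-1))!$, a product of $q-1$ factors each of size $\Theta(q^{n-1}/(n-1))$ by Lemma~\ref{lm:gauss-est}, so its $\log_q$ is $\Theta(n)$ — wait, that is only polynomial. The real growth must come from a $k$ with $I_k$ itself large, namely $k \approx n/2$: for $k = \lceil n/2\rceil - 1$ we have $I_k = \Theta(q^{n/2}/n)$ factors, each of size $\Theta(q^{n/2}/n)$ (since $n-k \approx n/2$ as well), giving $\log_q$ of that single factor equal to $\Theta\bigl((q^{n/2}/n)\cdot \log_q(q^{n/2}/n)\bigr) = \Theta(q^{n/2})$. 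For the upper bound I would bound every factor $I_{n-k}-j \le I_{n-k} \le q^{n-k}$ and every $I_k \le q^k$, so $\log_q(I_{n-k}!/(I_{n-k}-I_k)!) \le I_k(n-k) \le (n/2)q^k$, and $\sum_{k=1}^{n/2} (n/2)q^k = O(n\,q^{n/2}) $; this is an $n$-factor too loose, so I would instead use the sharper $\log_q(I_{n-k}-j)\le n-k \le n$ only for the $O(\log n)$ large values of $k$ near $n/2$ where $I_k$ is comparable to $q^{n/2}$, and for the remaining small $k$ note $\sum_{k \text{ small}} I_k (n-k)$ is dominated by the largest such $k$. Carefully, the terms $q^{k}$ in $\sum_k I_k(n-k)$ are geometrically increasing in $k$, so the whole sum is $\Theta$ of its last term $I_{\lceil n/2\rceil -1}\cdot(n - \lceil n/2\rceil+1) = \Theta(q^{n/2})$ after absorbing the polynomial factor $n/I_k$'s denominator — this matching of the $1/n$ in Gauss's formula against the linear factor $n-k$ is the delicate bookkeeping, and it yields $\log_q D_n = \Theta(q^{n/2})$.

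Finally, for $\log_q T_n = \log_q D_n + O(q^{n/3})$ I would use the characterization in Theorem~\ref{thm:char-max-mols}. Every $A \in \mathcal{T}_n$ is obtained by choosing, for each $g \in \mathcal{I}_k$ with $k < n/2$, an $f \in A$ divisible by $g$ of one of the two allowed shapes: the ``generic'' shape $f = gh$ with $h \in \mathcal{I}_{n-k}$ (exactly what {\sc Construction-Irreducible} does, hence $D_n \le T_n$), or one of the ``exceptional'' shapes $f = g^a$ with $a = n/k$, or $f = g^b h$ with $2 \le b$, $bk < n/2$, $h \in \mathcal{I}_{n-bk}$. So $T_n$ overcounts $D_n$ only through those $g$ of degree $k \mid n$ (allowing $f = g^{n/k}$) or admitting a valid $b \ge 2$; all such $g$ have degree $k \le n/3$ (since $bk < n$ with $b \ge 2$, or $k \le n/2$ with $k\mid n$ forces $k \le n/3$ once we've excluded $k$ near $n/2$ which is handled by condition~2). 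The number of these ``special'' $g$ is at most $\sum_{k \le n/3} I_k = O(q^{n/3})$, and each contributes a bounded (by a function polynomial in $q^{n/2}$, hence with $\log_q = O(n)$) multiplicative factor of alternative choices; a clean bound is $T_n / D_n \le (\text{number of choices per slot})^{\#\text{special slots}}$ with each slot having at most $1 + I_{n-2k} + \dots \le q^n$ options, so $\log_q(T_n/D_n) \le n \cdot O(q^{n/3}) = O(n\,q^{n/3})$. To get the stated $O(q^{n/3})$ without the $n$ I would argue that the exceptional shapes $f = g^b h$ remove one $h \in \mathcal{I}_{n-bk}$ from the pool available to other generic slots, so the true ratio is bounded by a product of falling-factorial ratios over $O(q^{n/3})$ indices, each contributing $O(n-k) = O(n)$ — still an $n$; alternatively, and more robustly, I would simply observe $\log_q T_n \le \log_q\bigl(\prod_{k<n/2}(\text{options for the } g\text{-slot})\bigr) \le \sum_{k<n/2} I_k \cdot \log_q(q^n) \le n\sum_{k<n/2}I_k$, split the sum at $n/3$: the part $k \ge n/3$ matches $\log_q D_n = \Theta(q^{n/2})$ up to the already-present $\Theta$, and the part $k < n/3$ is $\le n \sum_{k<n/3} I_k = O(n\,q^{n/3})$. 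Combined with $D_n \le T_n$ and $\log_q D_n = \Theta(q^{n/2})$, this gives $\log_q T_n = \Theta(q^{n/2})$ and $\log_q T_n = \log_q D_n + O(q^{n/3})$ once the polynomial slack is absorbed into the $\Theta$. The main obstacle is precisely this last absorption: showing that the polynomial-in-$n$ factors coming from Gauss's $1/n$ and from the ``$\log_q(q^n) = n$'' crude bounds do not spoil the clean $O(q^{n/3})$ error term, which requires replacing the crude per-slot bound $q^n$ by the sharper $I_{n-bk} \le q^{n/2}$ for $b\ge 2$ (so each special slot contributes $\log_q$ only $O(n/2) = O(n)$, and there are $O(q^{n/3})$ of them, and then noting $n \cdot q^{n/3} = O(q^{n/3+\epsilon}) = o(q^{n/2})$ so it is indeed negligible against the main term and can be stated as $O(q^{n/3})$ after renaming constants, or more honestly as $O(n q^{n/3})$).
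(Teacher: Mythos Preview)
Your derivation of the closed form for $D_n$ is correct and matches the paper exactly, and your sketch for $\log_q D_n = \Theta(q^{n/2})$ eventually lands on the right mechanism (the terms $\log_q E_k \approx I_k(n-k) \approx \frac{n-k}{k}q^k$ grow geometrically in $k$, so the sum is governed by $k$ near $n/2$). The paper does the same thing, but more cleanly via the two-sided estimate
\[
I_k \log_q(I_{n-k}-I_k) \;\le\; \log_q E_k \;\le\; I_k \log_q I_{n-k},
\]
together with $I_{n-k}-I_k \ge (1-\delta - o(1))\,q^{n-k}/(n-k)$ for a fixed $\delta<1$; this yields $\log_q E_k = \frac{n-k}{k}q^k - o(\frac{n}{k}q^k)$, and then a split of the sum at $k=\lfloor n/3\rfloor$ gives $\Theta(q^{n/2}) + \mathcal{O}(nq^{n/3})$.

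The real gap is in your treatment of $T_n$. First, the ratio bound ``$T_n/D_n \le (\text{options per special slot})^{\#\text{special slots}}$'' is not justified as stated: changing one slot to an exceptional shape $f=g^bh$ consumes an $h \in \mathcal{I}_{n-bk}$ that may interact with other slots, so the counts do not factor over slots in this way. The paper avoids this by bounding $T_n$ itself rather than the ratio: for $n/3 < k < n/2$ the \emph{only} admissible $f$ is $gh$ with $h \in \mathcal{I}_{n-k}$ (neither $b\ge 2$ nor $g^{n/k}$ is possible there), giving exactly $E_k$ collective choices; for $k \le n/3$ one simply overcounts by dropping the distinctness constraint on $h$, obtaining $T_n \le \bigl(\prod_{k>n/3} E_k\bigr)\cdot \prod_{k\le n/3} B_k$ with $B_k = \bigl(1+\sum_{d>n/2}^{n-k} I_d\bigr)^{I_k}$, and then $\prod_{k>n/3}E_k \le D_n$.

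Second, your concession to $\mathcal{O}(n\,q^{n/3})$ misses the point that makes the $n$ disappear: Gauss's formula gives $I_k \le q^k/k$, so $\log_q B_k \le \frac{n-k}{k}\,q^k$, and at $k=\lfloor n/3\rfloor$ the coefficient $\frac{n-k}{k}$ is bounded ($\approx 2$), not $\approx n$. The sum $\sum_{k\le n/3}\frac{n-k}{k}q^k$ is then dominated by its last term and equals $\mathcal{O}(q^{n/3})$. Equivalently, in your own cruder phrasing: the number of ``special slots'' is $\sum_{k\le n/3} I_k = \mathcal{O}(q^{n/3}/n)$, not $\mathcal{O}(q^{n/3})$, and that $1/n$ absorbs the per-slot factor $n$ coming from $\log_q(q^n)$. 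Keeping track of the $1/k$ in $I_k$ is exactly the ``delicate bookkeeping'' you flagged but did not carry out.
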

\begin{proof}
  Let us first prove the formula for $D_n$. For all
  $1 \le k \le \lfloor \frac{n}{2} \rfloor$, the set $\mathcal{P}'_k$ in step 1
  of the loop of {\sc Construction-Irreducible} is obtained by first taking an
  irreducible polynomial $f_1 \in \mathcal{I}_k$ and multiplying it by an
  irreducible polynomial $g_1 \in \mathcal{I}_{n-k}$. Hence, the choices for
  $g_1$ are $I_{n-k}$. Then, one takes another irreducible polynomial
  $f_2 \in \mathcal{I}_k$ and multiplies it by an irreducible polynomial
  $g_2 \in \mathcal{I}_{n-k}$, with $g_2 \neq g_1$. Thus, the possible choices
  for $g_2$ are $I_{n-k}-1$. Since the choices for the polynomials in
  $\mathcal{I}_{n-k}$ are independent, and since we have to select $I_k$ of
  them, we have that the number of choices for constructing $\mathcal{P}_k'$ is
  \begin{equation}
    \label{eq:ek}
    E_k = I_{n-k} (I_{n-k} - 1) \cdots (I_{n-k} - I_k + 1) = \frac{ I_{n-k} ! }{
      (I_{n-k} - I_k)! } \enspace .
  \end{equation}
  Further, since for $1 \le k_1,k_2 \le \lfloor \frac{n}{2} \rfloor$ with
  $k_1\neq k_2$ the choices for constructing $\mathcal{P}_{k_1}'$ and
  $\mathcal{P}_{k_2}'$ are independent, we obtain that
  \begin{equation}
    \label{eq:dn}
    D_n = \prod_{k=1}^{\lceil \frac{n}{2} - 1 \rceil} E_k = \prod_{k=1}^{ \lceil \frac{n}{2} - 1 \rceil} \frac{ I_{n-k} ! }{
      (I_{n-k} - I_k)! } \enspace .
  \end{equation}

  We now prove that $\log_q D_n = \Theta( q^{\frac{n}{2}} )$, starting with some
  estimates for $\log_q I_{n-k}$ and $\log_q(I_{n-k} - I_k)$. As a first remark,
  observe that Equation~\eqref{eq:est} in Lemma~\ref{lm:gauss-est} shows that
  \begin{equation}
    \label{eq:ineq-logink}
    \log_q I_{n-k} \le n-k \enspace .
  \end{equation}
  It is then easy to prove that for $n$ large enough
  and $k < n/2$, one has
  \begin{equation}
    \label{eq:ineq-delta}
    I_k \le \frac{1}{k} q^k < \delta \frac{1}{n-k} q^{n-k} \enspace ,
  \end{equation}
  for $\delta < 1$, e.g.\ $\delta = \frac{1}{q - 1/2}$. Combining
  Inequality~\eqref{eq:ineq-delta} with Equation~\eqref{eq:est} we obtain
  \begin{equation}
    \label{eq:comb-delta-est}
	I_{n-k} - I_k \ge \frac{1}{n-k} \left\{ (1 - \delta) q^{n-k} -
          \mathcal{O}(q^{\frac{n-k}{2}}) \right\} \enspace ,
  \end{equation}
  and hence
  \begin{equation}
    \label{eq:est-log-ink-o1}
    I_{n-k} - I_k = \frac{(1 - \delta - o(1)) q^{n-k}}{n-k} \enspace .
  \end{equation}
  Equation~\eqref{eq:est-log-ink-o1} thus yields the following estimate for
  $\log_q(I_{n-k} - I_k)$:
  \begin{align}
    \nonumber \log_q(I_{n-k} - I_k) &= n - k - \log_q(n-k) + \log_q( 1 - \delta - o(1)) \\
    \label{eq:est-log-i-nk}
                                    &= n - k - \mathcal{O}(\log(n-k)) \enspace .
  \end{align}
  Consider now $\log_qE_k$. By Equation~\eqref{eq:ek}, it is easy to see that
  \begin{equation}
    \label{eq:ineq-ek}
    I_k \log_q (I_{n-k} - I_k) \le \log_q E_k \le I_k \log_q I_{n-k}
    \enspace .
  \end{equation}
  Since by~\eqref{eq:ineq-logink} we have that $\log_q I_{n-k} \le n-k$, while
  by~\eqref{eq:est-log-i-nk} it holds that
  $\log_q E_k \ge I_k ( n - k - \mathcal{O}(\log(n-k)))$, the inequalities
  in~\eqref{eq:ineq-ek} can be rewritten as follows:
  \begin{equation}
    \label{eq:ineq-ek-1}
    I_k ( n - k - \mathcal{O}(\log(n-k))) \le \log_q E_k \le I_k \log_q I_{n-k}
    \enspace .
  \end{equation}
  Consequently, we obtain the following estimate for $\log_q E_k$:
  \begin{align}
    \nonumber \log_q E_k &= I_k \left( n - k - \mathcal{O}(\log(n-k)) \right)\\
    \nonumber           &= \frac{1}{k} (q^k - \mathcal{O}(q^{\frac{k}{2}}) ) \left( n -
                          k - \mathcal{O}(\log
                          (n-k)) \right)\\
    \nonumber           &= \frac{1}{k} ( (1 - o(1)) q^k ) ( ( 1 - o(1)) (n - k )
                          ) \\
    \nonumber           &= \frac{1}{k} (1 - o(1)) q^k (n-k) \\
    \label{eq:est-logek}
                         &= \frac{n-k}{k} q^k - o\left(\frac{n}{k} q^k\right)
                           \enspace .
  \end{align}
  Denoting $\sigma = \sum_{i=0}^\infty q^{-i}$, we have
  \begin{equation}
    \label{eq:ineq-geom}
    q^{\lceil \frac{n}{2} - 1 \rceil} \le \sum_{k= \left\lfloor \frac{n}{3} \right\rfloor + 1}^{
      \lceil \frac{n}{2} - 1 \rceil} \frac{n-k}{k} q^k \le 2 \sigma q^{\lceil
      \frac{n}{2} - 1 \rceil} \enspace .
  \end{equation}
  Therefore,
  \begin{align}
    \nonumber \log_q D_n &= \sum_{k=1}^{ \lceil \frac{n}{2} - 1 \rceil} \log_q E_k\\
    \nonumber            &= \sum_{k= \left\lfloor \frac{n}{3} \right\rfloor + 1}^{ \lceil
                           \frac{n}{2} - 1 \rceil} \log_q E_k +
                           \sum_{k=1}^{\left\lfloor \frac{n}{3} \right\rfloor} \log_q E_k\\
    \label{eq:est-logdn}
                         &= \Theta(q^{\frac{n}{2}}) + \mathcal{O}( n q^{\frac{n}{3}} ) \enspace .
  \end{align}

  We finally prove that $\log_q T_n = \log_q D_n + \mathcal{O}(q^{\frac{n}{3}})$. By
  Theorem~\ref{thm:char-max-mols}, in any maximal family of polynomials in
  $\mathcal{T}_n$, and any irreducible $g \in \mathcal{I}_k$ for
  $n/3 < k < n/2$, the corresponding $f$ must be $f = g h$ for some
  $h \in \mathcal{I}_{n-k}$, thus there are $E_k$ choices for the polynomials in
  the family that have an irreducible factor of degree $k$. If $k \le n/3$, then
  for any $g \in \mathcal{I}_k$ there are at most
  $1 + \sum_{d=\left\lfloor \frac{n}{2} \right\rfloor + 1}^{n-k} I_d$ choices for the corresponding
  polynomial $f$ in the family. Altogether, we obtain
  \begin{equation}
    \label{eq:ineq-tn}
    T_n \le \prod_{k= \left\lfloor \frac{n}{3} \right\rfloor + 1}^{ \lceil \frac{n}{2} - 1 \rceil}
    E_k \cdot \prod_{k= 1}^{\left\lfloor \frac{n}{3} \right\rfloor} \left\{ 1 + \sum_{d=\left\lfloor \frac{n}{2}
        \right\rfloor + 1}^{n-k} I_d \right\}^{I_k} \enspace .
  \end{equation}
  Define now $B_k$ as
  \begin{equation}
    \label{eq:bk}
	B_k = \left\{ 1 + \sum_{d=\left\lfloor \frac{n}{2} \right\rfloor + 1}^{n-k} I_d
        \right\}^{I_k} \le \left\{ q^{n-k} \right\}^{ \frac{1}{k} q^k } \le
        q^{\frac{n-k}{k} q^k} \enspace .
  \end{equation}
  Again, this yields
  $\sum_{k=1}^{\left\lfloor \frac{n}{3} \right\rfloor} \log_q B_k = \mathcal{O}(q^{\frac{n}{3}})$, which in
  turn gives us
  \begin{equation}
    \log_q T_n \le \log_q D_n + \sum_{k=1}^{\left\lfloor \frac{n}{2} \right\rfloor} \log_q B_k =
    \log_q D_n + \mathcal{O}( q^{\frac{n}{3}} ) \enspace .
  \end{equation}
  \qed
\end{proof}

\section{Conclusions and Perspectives}
\label{sec:conc}
In this paper we undertook an investigation of mutually orthogonal Latin squares
generated through linear bipermutive CA. First, we proved that any bipermutive
CA of diameter $d$ and length $2(d-1)$ can be used to generate a Latin square of
order $N = q^{d-1}$, with $q$ being the size of the CA alphabet. We then focused
on orthogonal Latin squares generated by LBCA, showing a characterization result
based on the Sylvester matrix induced by two linear local rules. In particular,
we proved that two LBCA generate orthogonal Latin squares if and only if the
polynomials associated to their local rules are relatively prime. In the second
part of the paper, we determined the number of LBCA pairs over $\F_q$ generating
orthogonal Latin squares, i.e. the number of coprime polynomial pairs $(f,g)$ of
degree $n$ over $\F_q$ where both $f$ and $g$ have nonzero constant term. In
particular, we remarked that the integer sequence generated by the closed-form
formula of the recurrence equation for $q=2$ corresponds to {\sc A002450}, a
sequence which is already known in the OEIS for several other facts not related
to polynomials or orthogonal Latin squares. In the last part of the paper, we
presented a construction of MOLS generated by LBCA based on irreducible
polynomials, and we proved that the resulting families are maximal. Finally, we
also derived a lower bound on the number of maximal MOLS induced by the proposed
construction, and showed that this bound is asymptotically close to the actual
number of maximal MOLS produced by LBCA.

There are several opportunities for further improvements on the results
presented in this paper.

A first direction for future research is to generalize the study to MOLS
generated by \emph{nonlinear} bipermutive CA. In this case, one obviously cannot
rely on the characterization result of Theorem~\ref{thm:ols-lbca}, since this
crucially depends on the use of the Sylvester matrix defined from the transition
matrices of linear CA. Preliminary work led by some of the authors of the
present paper showed that a necessary condition for a pair of BCA (either linear
or nonlinear) to generate orthogonal Latin squares is that their local rules
must be \emph{pairwise balanced}, meaning that each of the four pair of bits
must occur equally often in the juxtaposition of their truth
tables~\cite{mariot17-automata}. We believe it is still possible to use the
theory of resultants to characterize orthogonal Latin squares generated by
nonlinear BCA. As a matter of fact, the main difference between linear and
nonlinear pairs is that in the former case the system of
equations~\eqref{eq:system-1} concerns \emph{univariate} polynomials. On the
other hand, in the nonlinear case one can associate \emph{multivariate}
polynomials to the local rules, and then use the tools of elimination theory (to
which the concept of resultant belongs) to study the invertibility of the
resulting systems.

A second extension worth exploring, especially concerning the possible
applications related to secret sharing, is to investigate the structure of the
inverse of a Sylvester matrix. As described in~\cite{mariot16}, a family of $k$
MOLS generated by LBCA can be used to design a $(2,k)$-threshold secret where
the dealing phase corresponds to evaluating the global rules of the $k$ LBCA to
an initial configuration whose left half is the secret, while the right half is
randomly chosen. The outputs of the LBCA will be the shares distributed to the
$k$ players. In order to reconstruct the secret, any two out of $k$ players must
invert the Sylvester matrix associated to their CA (which are assumed to be
public) and then multiply it by the vector obtained by concatenating their
shares. Hence, an interesting question is whether the reconstruction phase can
be carried out again by CA computation, which means that the inverse of the
Sylvester matrix related to two LBCA must be of Sylvester type as well. This
question has been answered in negative during the Fifth International Students'
Olympiad in Cryptography -- NSUCRYPTO~\cite{olympiad-2018} for LBCA over the
finite field $\F_2$. In particular, it has been proved that the only Sylvester
matrix over $\F_2$ satisfying this condition is the one defined by the
polynomials $X^n$ and $1+X^n$, which does not correspond to a pair of LBCA since
$X^n$ has null constant term. However, the existence question for Sylvester
matrices whose inverses are of Sylvester type remains open for larger finite
fields.

Finally, another interesting idea would be to extend our investigation to
\emph{Mutually Orthogonal Latin Hypercubes} generated by CA, i.e. the
generalization of MOLS to higher dimensions. This would be equivalent to study
the conditions under which CA can be used to construct orthogonal arrays with
strength higher than $2$. A characterization result for such kind of orthogonal
arrays would allow one to design a general $(t,n)$-threshold secret sharing
scheme based on CA, or equivalently to design linear MDS codes through CA. A
possible idea to achieve this result would be to first characterize which
subclasses of bipermutive CA generate \emph{Latin hypercubes}. From there, the
next step would be to characterize sets of linear CA inducing Orthogonal Latin
Hypercubes, which are equivalent to orthogonal
arrays~\cite{keedwell15}. However, we note that there are no straightforward
ways to generalize the concept of resultant to more than two
polynomials~\cite{gelfand}. As a matter of fact, some of the existing
generalizations involve matrices which do not correspond to those related to
hypercubes generated by CA. To the best of our knowledge, the only resultant
matrix for several polynomials that most resemble the CA hypercube case has been
defined in~\cite{deissler}, which could thus represent a starting point for
future work on the subject.

\subsection*{Acknowledgements}
The authors wish to thank Arthur Benjamin, Curtis Bennett and Igor Shparlinski
for insightful comments on how to count the number of pairs of coprime
polynomials with nonzero constant term.

\bibliographystyle{spmpsci}
\bibliography{bibliography}

\end{document}